\definecolor{myurlcolor}{rgb}{0,0,0.7}
\definecolor{myrefcolor}{rgb}{0.8,0,0}
\newcommand{\ket}[1]{|#1\rangle}
\definecolor{nblue}{rgb}{0.2,0.2,0.7}
\definecolor{ngreen}{rgb}{0.2,0.6,0.2}
\definecolor{nred}{rgb}{0.8,0.2,0.2}
\definecolor{nblack}{rgb}{0,0,0}
\newtheorem{proposition}{Proposition}
\title{Entangled systems are unbounded sources of nonlocal correlations and of certified random numbers}
\titlerunning{Entangled system are unbounded sources of nonlocality and randomness} 
\author[1]{Florian J. Curchod}
\author[1]{Markus Johansson}
\author[2]{Remigiusz Augusiak}
\author[3]{\hspace{2cm}Matty J. Hoban}
\author[1]{Peter Wittek}
\author[1,4]{Antonio Ac\'{i}n}
\affil[1]{ICFO-Institut de Ciencies Fotoniques, The Barcelona Institute of Science and Technology, 08860 Castelldefels (Barcelona), Spain}
\affil[2]{Center for Theoretical Physics, Polish Academy of Sciences, Al. Lotnik\'ow 32/46, 02-668 Warsaw, Poland}
\affil[3]{School of Informatics, University of Edinburgh, 10 Crichton Street, Edinburgh EH8 9AB, UK}
\affil[4]{ICREA--Instituci\'{o} Catalana de Recerca i Estudis
Avan\c{c}ats, E--08010 Barcelona, Spain}
\authorrunning{F. J. Curchod, M. Johansson, R. Augusiak, M. J. Hoban, P. Wittek and A. Ac\'{i}n} 
\begin{document}

\maketitle

\begin{abstract}
The outcomes of local measurements made on entangled systems can be \textit{certified} to be random provided that the generated statistics violate a Bell inequality. This way of producing randomness relies only on a minimal set of assumptions because it is independent of the internal functioning of the devices generating the random outcomes. In this context it is crucial to understand both qualitatively and quantitatively how the three fundamental quantities -- entanglement, non-locality and randomness -- relate to each other. To explore these relationships, we consider the case where repeated (non projective) measurements are made on the physical systems, each measurement being made on the post-measurement state of the previous measurement. In this work, we focus on the following questions: \textit{For systems in a given entangled state, how many nonlocal correlations in a sequence can we obtain by measuring them repeatedly? And from this generated sequence of non-local correlations, how many random numbers is it possible to certify?} In the standard scenario with a single measurement in the sequence, it is possible to generate non-local correlations between two distant observers only and the amount of random numbers is very limited. Here we show that we can overcome these limitations and obtain \textit{any} amount of certified random numbers from an entangled pair of qubit in a pure state by making sequences of measurements on it. Moreover, the state can be arbitrarily weakly entangled. In addition, this certification is achieved by near-maximal violation of a particular Bell inequality for each measurement in the sequence. We also present numerical results giving insight on the resistance to imperfections and on the importance of the strength of the measurements in our scheme. 
\end{abstract}

\section{Introduction}

Bell's theorem \cite{Bell1964} has shown that the predictions of quantum mechanics demonstrate non-locality. That is, they cannot be described by a theory in which there are objective properties of a system prior to measurement that satisfy the no-signalling principle (sometimes referred to as ``local realism"). Thus, if one requires the no-signaling principle to be satisfied at the operational level then the outcomes of measurements demonstrating non-locality must be unpredictable \cite{Bell1964,PR,Masanes2006}. This unpredictability, or randomness, is not the result of ignorance about the system preparation but is \textit{intrinsic} to the theory.

Although the connection between quantum non-locality (via Bell's theorem) and the existence of intrinsic randomness is well known \cite{Bell1964,PR,BellReview,Masanes2006} it was analyzed in a quantitative way only recently \cite{Randomness,Colbeck}. It was shown how to use non-locality (probability distributions that violate a Bell inequality) to \textit{certify} the unpredictability of the outcomes of certain physical processes. This was termed \textit{device-independent randomness certification}, because the certification only relies on the statistical properties of the outcomes and not on how they were produced. The development of information protocols exploiting this certified form of randomness, such as device-independent randomness expansion \cite{Randomness,Colbeck,Vazirani} and  amplification protocols~\cite{CR,Gallego}, followed.\\

Entanglement is a necessary resource for quantum non-locality, which in turn is required for randomness certification. It is thus crucial to understand qualitatively and quantitatively how these three fundamental quantities relate to one another. In our work, we focus on asking how many observers in a sequence can be nonlocally correlated and how much certified randomness can be generated from given entangled states as the resource when these are measured repeatedly. 
An important step to answer this question was recently made in \cite{Silva}, in which it was shown that nonlocality generated by a maximally entangled state can be shared between any number of distant observers, however, at the cost of exponentially diminishing the amount of nonlocality, as measured by the violation of the CHSH Bell inequality, between all the observers. Here we answer a significantly more demanding question that such correlations can be made arbitrarily close to extremal for each observer, a crucial property for randomness certification. In this particular sense we show that the nonlocality does not need to be diminished, as for each observer the generated correlations violate a particular Bell inequality (almost) maximally. 

For randomness certification, progress has been made for entangled states shared between two parties, Alice ($A$) and Bob ($B$), in the standard scenario where each party makes a single measurement on his share of the system and then discards it. An argument adapted from Ref. \cite{DAriano} shows that either of the two parties, A or B can certify at most $2\textrm{log}_2 d$ bits of randomness \cite{Acin2015}, where $d$ is the dimension of the local Hilbert space the state lives in, which in turn implies a bound of $4\textrm{log}_2 d$ bits when the two outputs are combined. This demonstrates a fundamental limitation for device-independent randomness certification in the standard scenario. The main goal of our work is to show that this limitation on the amount of certifiable random bits from one quantum state can be lifted. To do this we will consider the sequential scenario, where sequences of measurements can be applied to each local system. Our main result is to prove that an unbounded amount of random bits can be certified in this scenario.  \\

Imagine the following situation where, contrary to the device-independent approach that we follow in this article, one has perfect control over the functioning of the device generating randomness. An entangled state initially prepared in the Pauli-$Z$ basis, i.e., a $\sigma_z$ eigenstate $\ket{0}$ or $\ket{1}$, is measured in the Pauli-$X$, or $\sigma_x$ basis $\ket{\pm} = \frac{\ket{0}+\ket{1}}{\sqrt{2}}$. The outcome of this measurement is perfectly random and the post-measurement state is now one of the two eigenstates of the Pauli-$X$ basis $\ket{\pm}$. If the device now measures this new state in the original Pauli-$Z$ basis, the outcome of this new measurement is again random and one of the $\sigma_z$ eigenstates is obtained. A device alternating between measurements in those two orthogonal basis thus allows one to obtain any amount of random bits from a single state as input.

Of course, this way of generating randomness can never be trusted, as one can always design a classical device (with deterministic outcomes -- a local model) that has the same behavior as the device we described, i.e., their outputs are indistinguishable. To \textit{certify} randomness one needs the generation of non-local correlations, that can not be simulated with classical resources. But is it nevertheless possible to use this idea of measuring a state repeatedly, in a scheme exploiting non-locality, to obtain more random numbers and beat the bounds on randomness certification? Clearly, certifying more randomness by making sequences of measurements on the same state depends on whether one is able to produce sequences of non-local correlations between distant observers, as otherwise no additional randomness can be certified. One of the obstacles to this is that if local (projective) measurements are used to generate the non-local correlations, the entanglement in the state is destroyed. Then the post-measurement state is separable and thus cannot be further used to generate nonlocality or to certify randomness. A challenge is therefore to come up with measurements that do not destroy all the entanglement in the state but nevertheless generate non-local correlations. With such measurements the post-measurement state will still be a potential resource for the generation of more non-local correlations and certified randomness.

Bell tests with sequences of measurements have received less attention in the literature than the standard ones with a single measurement round despite the novel features in this scenario \cite{GallegoSeq}, as for example the phenomenon known as hidden nonlocality \cite{Popescu}. In our work we show that they prove useful in the task of randomness certification, which also provides another example~\cite{Acin2015} where general measurements can overcome limitations of projective ones. More precisely, we describe a scheme where any number $m$ of random bits are certified using a sequence of $n > m$ consecutive measurements on the same system. This work thus shows that the bound of $4\textrm{log}_2d$ random bits in the standard scenario can be overcome in the sequential scenario, where it is impossible to establish any bound. The unbounded randomness is certified by a near-maximal violation of a particular Bell inequality for each measurement in the sequence. Moreover, for any finite amount of certified randomness, our scheme has a finite (yet very small) noise robustness. Our results show that entangled systems are an unbounded resource for the generation of non-local correlations and of certified random numbers.\\

This paper is an extended version of \cite{curchod2017unbounded}, where the main results are already included. The rest of the paper is organized as follows. In section \ref{SequScen}, we describe the sequential scenario that allows for multiple measurements on the same state. In section \ref{Guess}, we generalize the concept of guessing probabilities -- that allow to certify upper bounds on the predictive power of an adversary trying to guess the random numbers -- to the sequential scenario and obtain new results on their continuity properties. In section \ref{Nondestr} we introduce the main ingredients we will use in our scheme, in particular we introduce a family of measurements on two qubit states that allow us to retain some entanglement in the post-measurement states. In section \ref{Scheme} we describe our scheme that allows for the generation of nonlocal correlations between any number of distant observers and any amount of certified random numbers. In section \ref{Numerics} we present numerical results on the relation between the amount of violation of the family of inequalities introduced in \cite{Acin2012} and the amount of randomness that can be certified from it. In section \ref{Weak} we obtain numerical results to understand the relation between the strength of the measurement and the amount of randomness that can be certified from it. We conclude in section \ref{Conclusions} with additional remarks and potential future work.

\section{The sequential measurements scenario}\label{SequScen} Before presenting our results, let us introduce the scenario we work in. We carry over many of the features from the standard scenario except now we allow party $B$ to make multiple measurements in a sequence on his share of the state. One can visualize this as in Fig. \ref{Fig:scen} where $B$ is split up into several $B$s, each one corresponding to a measurement made on the state and labeled by $B_i$, $i \in \{1,2,..,n\}$, where $n$ is the total number of measurements made in the sequence. Each $B_i$ makes one measurement and the post-measurement state is sent to $B_{i+1}$. We organize the Bobs such that $B_i$ is doing his measurement \textit{before} $B_j$ for $i<j$. Thus in principle $B_j$ can receive the information about the inputs and outputs of previous measurements $B_i$ for all $i<j$.  

\begin{figure}[h!]
\scalebox{0.24}{\includegraphics{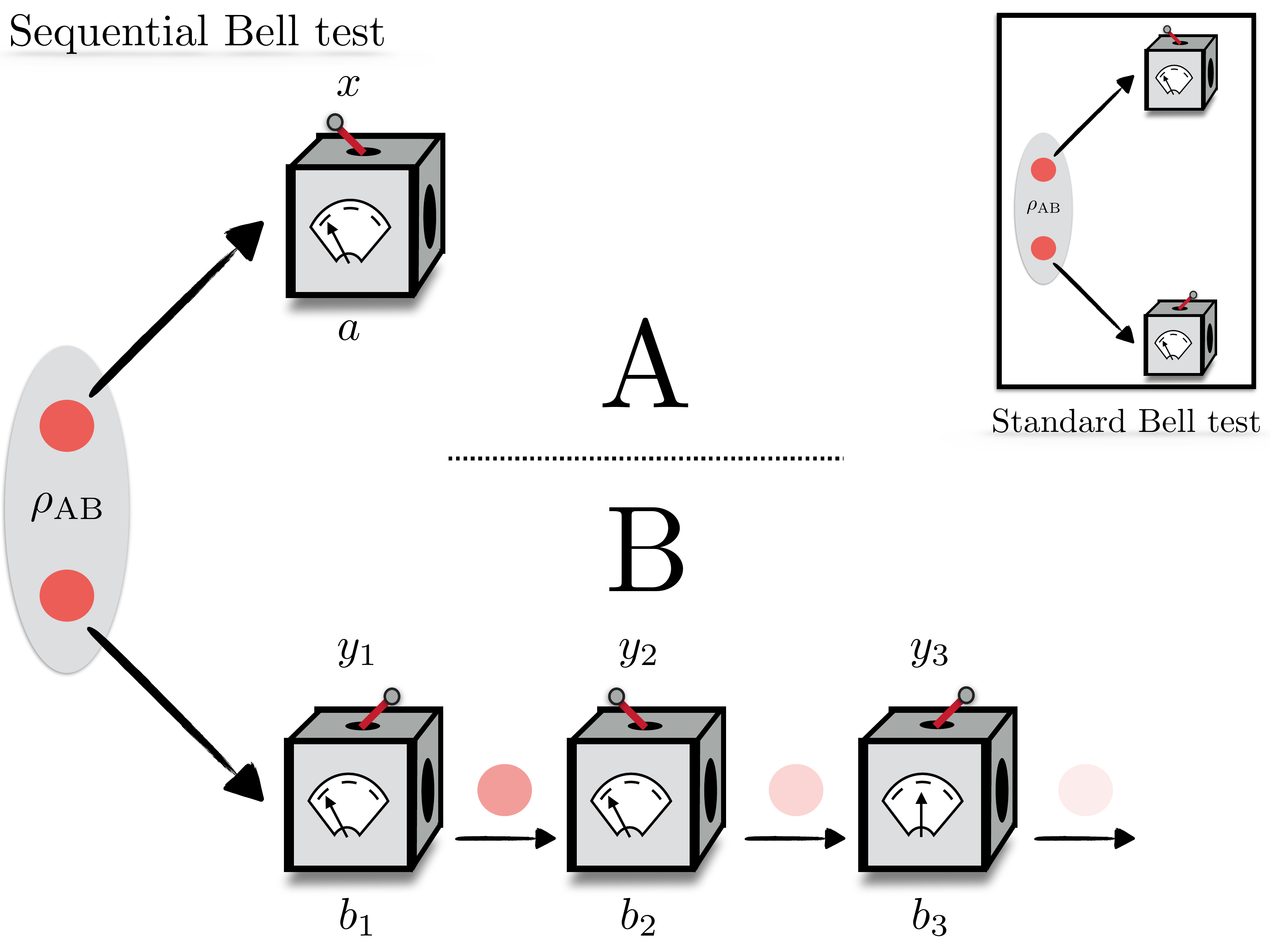}}
\caption{\label{Fig:scen}
The standard scenario, where parties $A$ and $B$ make a single quantum measurement on their share of the state and discard it versus the sequential scenario where the second party $B$ makes multiple measurements on his share.}
\end{figure}

\section{Randomness certification: from the standard to the sequential scenario}\label{Guess}
To quantify the randomness produced in the setup, we put the above scenario in the setting of \textit{non-local guessing games} (e.g. Refs. \cite{Acin2012, Nieto, DeLaTorre, Acin2015}). Let us consider an additional adversary Eve ($E$) who is in possession of a quantum system potentially correlated to the one of $A$ and $B$. The global state is denoted $\rho_{ABE}$. We assume that at each round of the experiment, $E$ is the one preparing the state $\rho_{ABE}$ and distributes $\rho_{AB} = \mathrm{Tr}_E \rho_{ABE}$ to $A$ and $B$. This state will be used to make the measurements in the sequence and the aim of $E$ is to try to guess $B$'s outcomes by using measurements on her share of the state $\rho_{ABE}$. The parties $A$ and $B_i$s, having no knowledge about the state or the real measurements made on it,  see their respective devices as black boxes that receive some classical input $x \in \{0,1\}$ and $y_i \in \{0,1\}$, $y_1,y_2,..,y_n \equiv \vec{y}$, respectively, and that
 generate a classical output $a \in \{\pm 1\}$ and $b_i \in \{\pm 1\}$, $(b_1,b_2,..,b_n) \equiv \vec{b}$, respectively (see Fig. \ref{Fig:scen}). They generate statistics from multiple runs of the experiment to obtain the observed probability distribution $P_{\textrm{obs}}$ with elements $p_{\textrm{obs}}(a,\vec{b}|x,\vec{y})$. This distribution $P_{\textrm{obs}}$ lives inside the set of quantum correlations $\mathcal{Q}_n$ obtained from measurements on quantum states in a sequence as we described. This set is convex and thus can be described in terms of its extreme points, denoted $P_{\textrm{ext}}$, and any $P_{\textrm{obs}}$ can be written as  $P_{\textrm{obs}}=\sum\limits_{\textrm{ext}}q_\textrm{ext}P_{\textrm{ext}}$, where $\sum\limits_{\textrm{ext}}q_\textrm{ext}=1$ and every $q_\textrm{ext}\geq 0$.

From studying the outcome statistics \textit{only} we can bound $E$'s predictive power by allowing her to have complete knowledge of how $P_{\textrm{obs}}$ is decomposed into extreme points, i.e., she knows the probability distribution $q_\textrm{ext}$ over extreme points $P_{\textrm{ext}}$. This predictive power is quantified via the \textit{device-independent guessing probability} (DIGP)~\cite{Acin2012} where we fix the particular input string $y_1^0,y_2^0,..,y_n^0 \equiv \vec{y}^0$ for which $E$ has to guess the outputs $\vec{b}$. The DIGP, denoted by $G(\vec{y}^0,P_{\textrm{obs}})$, is then calculated as the optimal solution to the following optimization problem \cite{DeLaTorre,Nieto}:
\begin{align}
& G(\vec{y}^0,P_{\textrm{obs}})= \max_{\{q_\textrm{ext},P_{\textrm{ext}}\} } \sum_{\textrm{ext}} q_\textrm{ext} \max_{\vec{b}}p_{\textrm{ext}}(\vec{b}|\vec{y}^0) \nonumber \\
 &\text{subject to:} \nonumber \\
 &p_{\textrm{ext}}(\vec{b}|\vec{y}^0)=\sum_{a}p_{\textrm{ext}}(a,\vec{b}|x,\vec{y}^0),\hspace{1cm}\forall x\\
&P_{\textrm{obs}}=\sum_{\textrm{ext}}q_\textrm{ext}P_{\textrm{ext}},\hspace{2.3cm}P_{\textrm{ext}}\in \mathcal{Q}_n. \label{sumupto}
\end{align}
The operational meaning of this quantity is clear: Eve has a complete description of the observed correlations in terms of extreme points. She then guesses the most probable outcome for each extreme point. The standard scenario with a single measurement round can also be represented in this formalism by simply considering that $\vec{b} = b$ and $\vec{y}^{(0)} = y^{(0)}$. To quantify the amount of bits of randomness that is certified, we use the \textit{min entropy} $H(\vec{y}^0,P_{\textrm{obs}})=-\log_{2}G(\vec{y}^0,P_{\textrm{obs}})$ which returns $m$ bits of randomness if $G(\vec{y}^0,P_{\textrm{obs}})=2^{-m}$. The amount of bits of randomness quantified in this way is the figure of merit in this work and our goal is to obtain as many bits as possible from a single system.\\

We will now derive some general properties of the guessing probability \eqref{sumupto} in the form of theorems \ref{theorem1} and \ref{SequTheo}. Let us stress here that these results are not limited to the guessing probability used in this work but are general properties of guessing probabilities. A more detailed discussion and an introduction to the topic of guessing probabilities and their use in randomness certification can be found in the appendices, as well as the proofs of the theorems that we discuss here.

For a single measurement on each system (i.e. a sequence of $n = 1$ measurement), which corresponds to the standard Bell scenario and $\mathcal{Q} \equiv \mathcal{Q}_1$ the set of quantum correlations for a single measurement on each subsystem we have that:
\begin{proposition}\label{prop1}
The function $G(y^0,P_{\textrm{obs}})$ on the set of quantum distributions $\mathcal{Q}$ is continuous in the interior of $\mathcal{Q}$.
\end{proposition}
\begin{proposition}\label{prop2}
The function $G(y^0,P_{\textrm{obs}})$ is continuous in any extremal point of $\mathcal{Q}$.
\end{proposition}
The proofs of these two propositions are based mostly on general properties of concave functions \cite{rock} and of concave roof extensions in particular \cite{buci}, and can be found in section \ref{APP2:ContinuityProof} of the appendices. In other words the guessing probability for a single measurement is continuous everywhere except possibly on some points that lie on the surface of the quantum set but that are not extremal. An example of this can be obtained from the measurements described in \cite{Randomness} for a state with arbitrarily little entanglement. The joint conditional probability distribution (introduced below, see \eqref{Eq:Ibeta2}) corresponding to those measurements made on such a state has $G(y^0,P_{obs})=1/2$ and is at the same time arbitrarily close to a joint conditional probability distribution corresponding to measurements on a product state with $G(y^0,P_{obs})=1$, i.e., a local point. The key is that this local point is not extremal, it lies somewhere on the surface of the local (and quantum) set but can be decomposed into other extremal (local) points, i.e. is not a vertex of the local polytope. Discontinuities of $G(y^0,P_{\textrm{obs}})$ can thus appear only at the boundary between extremal points and non-extremal points lying on the surface of the set, and in the rest of the set it is continuous.\\

In general -- and in particular in our work -- the optimization problem \eqref{sumupto} can be relaxed to an optimization where instead of insisting on $P_{\textrm{obs}}=\sum\limits_{\textrm{ext}}q_\textrm{ext}P_{\textrm{ext}}$ \eqref{sumupto}, one only imposes that the observed statistics $P_{\textrm{obs}}$ give a particular Bell inequality violation \cite{Randomness}. The optimal solution to this new problem is an upper bound to the optimal solution of \eqref{sumupto}. Crucially, this relaxation often gives non trivial bounds as shown in our case for example. From now on, every time we refer to a guessing probability we refer to this relaxation of the problem to a particular Bell inequality violation.

Now we consider a Bell expression $I$ with its maximal value $t_{\textrm{max}}$ on the quantum set $\mathcal{Q}$. We define the hyperplane $H_t$ to contain the elements of $\mathcal{Q}$ for which the value of I is $t \leq t_{\textrm{max}}$ and further we define the restriction $G(y^0,P_{\textrm{obs}})_t$ of $G(y^0,P_{\textrm{obs}})$ to the intersection of $H_{t}$ with $\mathcal{{{Q}}}$ and let $\max G(y^0,P_{\textrm{obs}})_t$ be the maximum of the guessing probability on this intersection. From Propositions ($1$) and ($2$) we can show that:
\begin{theorem}\label{theorem1}
If the intersection of $H_{t_\textrm{max}}$ with $\mathcal{Q}$ is a single (thus extremal) point, there exists a $t_{\textrm{c}} < t_{\textrm{max}}$ such that $G(y^0,P_{\textrm{obs}})_t$ is a continuous function of $t$ for $t_{\textrm{c}} \leq t \leq t_{\textrm{max}}$
\end{theorem}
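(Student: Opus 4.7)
The approach is to reduce the statement to two ingredients: (a) the function $g(t) := \max G(y^0, P_{\mathrm{obs}})_t$ is itself concave in the single real variable $t$, which automatically gives continuity on the interior of its domain; and (b) continuity at the right endpoint $t = t_{\max}$ can be recovered from the uniqueness of the maximizer together with Proposition~\ref{prop2}. The theorem then follows by picking any $t_c$ strictly above the minimum value of $I$ on $\mathcal{Q}$.

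First I would establish that $g$ is concave on the interval $[I_{\min}, t_{\max}]$. The DIGP $G(y^0,P)$ is by construction the concave-roof extension of $P \mapsto \max_{\vec b} p(\vec b \mid \vec y^0)$ to all of $\mathcal{Q}$, so taking convex combinations of optimal extremal decompositions of two points immediately yields $G(\lambda P_1 + (1-\lambda) P_2) \geq \lambda G(P_1) + (1-\lambda) G(P_2)$. Since the Bell expression $I$ is affine in $P$, if $P_i \in H_{t_i} \cap \mathcal{Q}$ attains $g(t_i)$ then $P_\lambda := \lambda P_1 + (1-\lambda) P_2$ lies in $\mathcal{Q}$ by convexity and in $H_{\lambda t_1 + (1-\lambda) t_2}$ by linearity of $I$, so $g(\lambda t_1 + (1-\lambda) t_2) \geq G(P_\lambda) \geq \lambda g(t_1) + (1-\lambda) g(t_2)$. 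Existence of the maximizers $P_i$ is guaranteed because $G$ is upper semicontinuous (as a concave roof in finite dimensions) and the slices $H_{t_i} \cap \mathcal{Q}$ are compact. Standard one-variable concave-function theory then gives continuity of $g$ on the open interval $(I_{\min}, t_{\max})$.

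Next I would handle the endpoint $t = t_{\max}$ using the uniqueness hypothesis. Let $\{P^*\} = H_{t_{\max}} \cap \mathcal{Q}$; as a unique maximizer of an affine functional on a compact convex set, $P^*$ is automatically an extreme point of $\mathcal{Q}$, so Proposition~\ref{prop2} applies and $G$ is continuous at $P^*$. A standard compactness argument shows that any sequence $P_n \in \mathcal{Q}$ with $I(P_n) \to t_{\max}$ must converge to $P^*$ (otherwise a convergent subsequence would produce a second maximizer distinct from $P^*$). Applying this to a sequence of maximizers $P_n \in H_{t_n} \cap \mathcal{Q}$ attaining $g(t_n)$ for $t_n \nearrow t_{\max}$, continuity of $G$ at $P^*$ gives $g(t_n) = G(P_n) \to G(P^*) = g(t_{\max})$. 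Combined with the concavity-based continuity on $(I_{\min}, t_{\max})$, this shows $g$ is continuous on $[t_c, t_{\max}]$ for any $t_c \in (I_{\min}, t_{\max})$.

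The main obstacle is the continuity at $t_{\max}$ itself, and this is precisely where the single-point assumption is essential. Without it, the slices $H_t \cap \mathcal{Q}$ could converge in Hausdorff distance to an entire face of $\mathcal{Q}$ rather than to a single extreme point; Proposition~\ref{prop2} would no longer apply uniformly across that limiting face, and the maximum of $G$ over the slices could remain strictly larger than $g(t_{\max})$ in the limit — exactly the pathology illustrated in the paragraph preceding the theorem, where a non-extremal boundary point of $\mathcal{Q}$ carries guessing probability strictly larger than nearby extremal points. Once continuity at $t_{\max}$ is secured via Proposition~\ref{prop2}, everything else reduces to elementary concavity on an interval of the real line.
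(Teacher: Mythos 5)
Your proposal is correct, but it reaches the conclusion by a route that differs from the paper's in an instructive way. The paper's own proof (in its appendix on continuity at a unique point of maximal violation) is short: it asserts that $\max G(y^0,P_{\textrm{obs}})_t$ can be discontinuous at $t$ only if $H_t$ meets a discontinuity point of $G$, and then invokes Propositions 1 and 2 together with the appendix classification that discontinuities of $G$ live only at non-extremal boundary points, the unique (hence extremal) maximizer being a continuity point. Your argument replaces the first, unproved assertion by an elementary observation the paper never states: since $G$ is concave on $\mathcal{Q}$ and $I$ is affine, the one-variable function $g(t)=\max G(y^0,P_{\textrm{obs}})_t$ is itself concave in $t$, hence continuous on the open interval regardless of where the discontinuities of $G$ sit. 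This actually closes a small loophole in the paper's sketch: the slice criterion alone does not obviously exclude discontinuities of $g$ at values $t<t_{\max}$ whose slices contain non-extremal boundary points (such points could in principle accumulate near $P^*$), whereas concavity rules this out unconditionally, and your conclusion (continuity on all of the interval up to and including $t_{\max}$) is strictly stronger than the mere existence of $t_c$. At the endpoint both proofs hinge on the same ingredient, continuity of $G$ at the unique extremal point (Proposition 2); your compactness argument forcing maximizers $P_n\to P^*$ is a clean way to convert it into continuity of $g$ at $t_{\max}$. Two caveats, neither fatal: your justification that maximizers exist because $G$ is upper semicontinuous ``as a concave roof in finite dimensions'' is glib --- upper semicontinuity of roof extensions is not automatic and is precisely what the paper's appendix establishes via the theorem of Bucicovschi and Lebl, so you should cite that result (or avoid attainment altogether by using $\epsilon$-maximizers and a segment from an interior point to $P^*$); and the compactness of $\mathcal{Q}$ used in your subsequence argument requires passing to the closure of the quantum set, a step the paper's first appendix explicitly licenses.
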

The proof of this theorem can be found in section \ref{APP6:ContAtPextUnique} of the appendices. In the other case, if the intersection of $H_{t_\textrm{max}}$ with $\mathcal{Q}$ has more than one point, it also contains a set of non-extremal points of $\mathcal{Q}$ and therefore a discontinuity of $G(y^0,P_{\textrm{obs}})_{t}$ at $t_{\textrm{max}}$ can not be ruled out by theorem \eqref{theorem1}. In other words, if the violation of a particular Bell inequality $I$ is achieved by a unique quantum point (as for example the following \eqref{Eq:Ibeta}), the guessing probability close to that point is continuous.\\

Until now, we have considered the continuity properties of the guessing probability in the standard scenario with a single measurement in the sequence. Now we would like to extend those results to the guessing probability in the sequential measurement scenario with $n \geq 2$ measurements being made on the subsystems. Remember that we split party $B$  into many $B_i$, so that party $B_i$ makes the $i$th measurement on the system. The measurement setting of $B_i$ is $y_i$ and its outcome $b_i$ (see Fig. 1). In our work, we will always take $y_i \in \{0,1\}$ and $b_i \in \{0,1\}$, but the following results can be generalized to any number of inputs and outcomes (they may even be different for each measurement in the sequence).

Now consider the joint conditional probability distributions $P^i_{\textrm{obs}}(a,b_i|x,y_1,...,y_i,b_1,...,b_{i-1})$ between $A$ and each $B_i$, that is the joint conditional probability distribution between $A$ and $B_i$ conditioned on what happened before the $i$th measurement, namely the input choices $y_1,...,y_{i-1}$ and the outcomes $b_1,...,b_{i-1}$ that were obtained \textit{before} measurement $i$. There are $n$ of those joint conditional probability distributions living in $\mathcal{Q}$ that can be obtained directly from the whole probability distribution for the sequence $P_{\textrm{obs}}(a\vec{b}|x\vec{y})$ living in $\mathcal{Q}_n$. Now suppose that we play, for each distribution $P^i_{\textrm{obs}}(a,b_i|x,y_1,...,y_i,b_1,...,b_{i-1})$, a Bell game $I_i$ such that $I_i(P_i(a,b_i|x,y_1,...,y_i,b_1,...,b_{i-1})) = t_i \leq t_i^{\textrm{max}}$, where $t_i^{\textrm{max}}$ is the maximum of $I_i$ over the set $\mathcal{Q}$.


\begin{theorem}\label{SequTheo}
Suppose that each joint conditional probability distribution $P^i_{\textrm{obs}}(a,b_i|x,y_1,...,y_i,b_1,...,b_{i-1})$ between $A$ and $B_i$ in the sequence is such that $I_i(P_i(a,b_i|x,y_1,...,y_i,b_1,...,b_{i-1})) = t_i$ and consider the limit where each $t_i \rightarrow t_i^{\textrm{max}}$. Suppose also that for each $i$, $G_i(y_i^0,P^i_{\textrm{obs}}(a,b_i|x,y_1,...,y_i,b_1,...,b_{i-1}))$ attains its smallest possible value at $t_i=t_i^{\textrm{max}}$. Then if the maximal value $t_i^{\textrm{max}}$ of each $I_i$ is achieved in a unique quantum point in $\mathcal{Q}$:
\begin{equation}
G(\vec{y}^0,P_{\textrm{obs}}(a\vec{b}|x\vec{y})) \rightarrow \prod\limits_{i=1}^{n} G_i(y_i^0,P^i_{\textrm{obs}}(a,b_i|x,y_1,...,y_i,b_1,...,b_{i-1})) 
\end{equation}
\end{theorem}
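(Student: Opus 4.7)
The plan is to reduce the statement to an evaluation at the limit $t_i = t_i^{\textrm{max}}$ for all $i$, then pin down the structure of Eve's decomposition using uniqueness of the quantum maximizers, and finally use the chain rule to factorize the resulting product.

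First, I would appeal to Theorem~\ref{theorem1}, applied to each Bell expression $I_i$ restricted to the single-measurement quantum set $\mathcal{Q}$, to obtain that $G_i(y_i^0,\cdot)$ is continuous as $t_i \to t_i^{\textrm{max}}$. To lift this to the full sequential guessing probability $G$ on $\mathcal{Q}_n$, I would follow the concave-roof strategy of Propositions~\ref{prop1} and~\ref{prop2}: $G$ is the concave roof of $\max_{\vec b} p(\vec b | \vec y^{0})$ on the convex set $\mathcal{Q}_n$, and so it is continuous at a point of $\mathcal{Q}_n$ as soon as that point is extremal in $\mathcal{Q}_n$. The uniqueness of each quantum maximizer of $I_i$ will force this extremality in the limit (see the next step), so $G$ is continuous there and it suffices to evaluate $G$ at the limiting distribution. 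Lifting the continuity from $\mathcal{Q}$ to $\mathcal{Q}_n$ is the step I expect to be the most delicate, since the one-parameter slice in $\mathcal{Q}$ appearing in Theorem~\ref{theorem1} must be replaced by a simultaneous approach through a higher-dimensional family of distributions obtained by letting all the $t_i$'s tend to their maxima at once.

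For the second step, fix the limit point and consider any admissible decomposition $P_{\textrm{obs}} = \sum_k q_k P_k$ with $P_k \in \mathcal{Q}_n$. Taking the $i$-th conditional joint on both sides and using no-signalling from $A$ to the Bobs gives
\begin{equation}
P^{i}_{\textrm{obs}}(a,b_i|x,y_1,\dots,y_i,b_1,\dots,b_{i-1}) \;=\; \sum_k \tilde{q}_k\, P^{i}_{k}(a,b_i|x,y_1,\dots,y_i,b_1,\dots,b_{i-1}),
\end{equation}
with non-negative weights $\tilde{q}_k = q_k P_k(b_1,\dots,b_{i-1}|y_1,\dots,y_{i-1}) / P_{\textrm{obs}}(b_1,\dots,b_{i-1}|y_1,\dots,y_{i-1})$ summing to one and with each $P^{i}_{k} \in \mathcal{Q}$. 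Since $I_i$ is linear, the saturation $I_i(P^{i}_{\textrm{obs}}) = t_i^{\textrm{max}}$ forces $I_i(P^{i}_{k}) = t_i^{\textrm{max}}$ for every $k$ with $\tilde{q}_k > 0$, and the uniqueness hypothesis then gives $P^{i}_{k} = P^{i}_{\textrm{obs}}$ on the support of the relevant $b_1,\dots,b_{i-1}$. Because a distribution in $\mathcal{Q}_n$ is recovered from its conditional joints through the chain rule $P(a,\vec{b}|x,\vec{y}) = P(a|x)\prod_i P(b_i | a,x,y_1,\dots,y_i,b_1,\dots,b_{i-1})$ whose factors can be read off from the $P^{i}_{k}$'s, coincidence of all $n$ conditional joints forces $P_k = P_{\textrm{obs}}$. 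Only the trivial decomposition is therefore available to Eve, so $G(\vec{y}^{0}, P_{\textrm{obs}}) = \max_{\vec{b}} P_{\textrm{obs}}(\vec{b}|\vec{y}^{0})$.

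Finally, I would expand this remaining quantity by the chain rule $P_{\textrm{obs}}(\vec{b}|\vec{y}^{0}) = \prod_{i=1}^{n} P(b_i | y_1^{0},\dots,y_i^{0},b_1,\dots,b_{i-1})$ and maximize from $i=n$ down to $i=1$. At each step the innermost factor is the $b_i$-marginal of the unique extremal $P^{i}_{\textrm{obs}}$, and the hypothesis that $G_i$ attains its smallest value at $t_i=t_i^{\textrm{max}}$ identifies $\max_{b_i} P(b_i | y_1^{0},\dots,y_i^{0},b_1,\dots,b_{i-1})$ with $G_i(y_i^{0},P^{i}_{\textrm{obs}})$ and ensures that this value is the same for every fixing of $b_1,\dots,b_{i-1}$. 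Iterating from the innermost factor outwards yields $G(\vec{y}^{0},P_{\textrm{obs}}) = \prod_{i=1}^n G_i(y_i^{0},P^{i}_{\textrm{obs}})$, which is the claim.
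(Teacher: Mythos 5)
Your proposal is correct in substance, but it proves the theorem by a genuinely different route than the paper does. The paper's own proof (appendices \ref{APP3:PGforSequ} and \ref{APP4:ArbitrarySequ}) never passes to an exact limit point and never needs continuity of the sequential guessing probability on $\mathcal{Q}_n$: it works directly with Eve's optimal decomposition of the \emph{near}-extremal sequential distribution, sandwiches her posterior-weighted probability of guessing $b_m$ given any history of previous outcomes between $1/2$ and the single-round quantity $G(y^0_m,P_{\textrm{obs}})$ (Eqs.~\eqref{klut}--\eqref{nut}), and then applies an approximate pinching lemma --- if a convex combination $\sum_\lambda k_\lambda x_\lambda$ is close to $a$ while every $x_\lambda\geq a$, then for each $\lambda$ either $x_\lambda$ is close to $a$ or $k_\lambda$ is close to $0$ --- inductively along the sequence, concluding that every term in the decomposition either contributes a product of conditionals close to $2^{-n}$ or has vanishing weight. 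Uniqueness of the maximizer enters the paper's argument only through theorem \ref{theorem1}, to guarantee that $G_i\to 1/2$ in the first place. You instead evaluate at the exact limit point $P^*$: your extremality argument for $P^*$ in $\mathcal{Q}_n$ (conditioning the decomposition to get posterior weights, linearity of $I_i$ forcing saturation of every component, uniqueness, then chain-rule reconstruction) is correct and is a nice structural fact that the paper never isolates; combined with continuity at extremal points it makes the factorization almost immediate and shows the minimality hypothesis on $G_i$ is barely needed at the exact point. What each approach buys: yours is conceptually cleaner; the paper's avoids any geometric lemma about $\mathcal{Q}_n$ and gives implicit quantitative control of $G-2^{-n}$ in terms of the deviations $G_i-1/2$, which is what underlies the noise-robustness remarks in section \ref{Scheme}.

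Three caveats you should address to make the argument complete. First, the continuity of $G(\vec{y}^0,\cdot)$ at extremal points of $\mathcal{Q}_n$ is not proven anywhere in the paper --- Propositions \ref{prop1} and \ref{prop2} are established only for the single-round set $\mathcal{Q}$ (appendix \ref{APP2:ContinuityProof}); the proof does transfer, since it uses only compactness, concavity of the roof, and continuity of the base function $\max_{\vec{b}}p(\vec{b}|\vec{y}^0)$, but you are assuming a lemma the paper deliberately routes around, and you should prove the transfer explicitly. Second, continuity at $P^*$ only helps if $P_{\textrm{obs}}\to P^*$ in $\mathcal{Q}_n$; the hypothesis is merely $t_i\to t_i^{\textrm{max}}$ for the conditionals, so you need compactness of $\mathcal{Q}$ plus uniqueness (any accumulation point of $P^i_{\textrm{obs}}$ maximizes $I_i$, hence equals $P^{i,*}$) and then the chain rule, using that the conditioning probabilities at $P^*$ are nondegenerate, to upgrade this to convergence of the full sequential distribution. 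Third, the history-independence of $\max_{b_i}P(b_i|y^0_1,\dots,y^0_i,b_1,\dots,b_{i-1})$ in your final step comes from uniqueness of the maximizer (all positive-probability histories' conditionals converge to the same $P^{i,*}$), not from the hypothesis that $G_i$ attains its smallest value at $t_i^{\textrm{max}}$, as you state.
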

where $G_i(y_i^0,P^i_{\textrm{obs}}(a,b_i|x,y_1,...,y_i,b_1,...,b_{i-1}))$ is the (non sequential) relaxed guessing probability \eqref{sumupto} of an adversary $E$ trying to guess outcome $b_i$ for input $y_i^0$ from the observed joint probability distribution $P^i_{\textrm{obs}}(a,b_i|x,y_1,...,y_i,b_1,...,b_{i-1}))$. The proof of this theorem can be found in appendices \ref{APP3:PGforSequ} and \ref{APP4:ArbitrarySequ}. In other words, if each measurement in the sequence taken separately -- thus not seen as in a sequence -- leads to correlations close enough to the unique maximal violation of inequality $I_i$ between $A$ and $B_i$ only, and if this maximal violation corresponds to the minimal possible guessing probability for $b_i$, then the guessing probability for the whole sequence tends to the product of the individual guessing probabilities of the outcomes $b_i$.\\

Before presenting our results, it is worth explaining why the causal constraints imposed by the sequential scenario make it stronger than standard Bell tests with one measurement in the sequence. At first sight, one could be tempted to group all the measurements in the sequence into a single box receiving an input string $\vec{y}_n$ to output another string $\vec{b}_n$, as in a standard Bell test. However, in general a sequence of measurements can not be represented as a single measurement. To understand this, note that in the sequential scenario the outcome $b_i$ can depend only on variables produced in its past, namely the input choices $y_1,y_2,...,y_i$ and the outcomes $b_1,b_2,...,b_{i-1}$ that were \textit{previously} obtained. However, in the single measurement scenario, the measurement box receives all inputs and produces all outputs at once. In particular, outcome $b_i$ can now be a function of input choices $y_{j>i}$ and outcomes $b_{j>i}$ that are produced in the \textit{future}. That is, such a big box may violate the physical constraints coming from the sequential arrangement and the assumption that signaling from the future to the past is impossible. These additional causality constraints further limit Eve's predictability with respect to a standard Bell test and are responsible of the unbounded amount of certified randomness.

\section{Making non-destructive measurements on qubit states}\label{Nondestr} 

Alice and Bob share the pure two-qubit state
\begin{equation}\label{Eq:qubits}
	\ket{\psi(\theta)} = \cos(\theta)\ket{00}+\sin(\theta)\ket{11}
\end{equation}
that for all $\theta \in ]0,\pi/2[$ is entangled. In Ref. \cite{Acin2012}, a family of Bell inequalities was introduced: 
\begin{equation}\label{Eq:Ibeta}
I_{\theta} = \beta\langle \mathbb{B}_0 \rangle + \langle \mathbb{A}_0\mathbb{B}_0 \rangle + \langle \mathbb{A}_1\mathbb{B}_0 \rangle\\ + \langle \mathbb{A}_0\mathbb{B}_1 \rangle - \langle \mathbb{A}_1\mathbb{B}_1 \rangle
\end{equation}
where $\beta=2\cos(2 \theta)/[1+\sin^2(2 \theta)]^{1/2}$,  $\langle \mathbb{B}_y \rangle = p(b=+1|y)-p(b=-1|y)$ and $\langle \mathbb{A}_x \mathbb{B}_y \rangle = p(a=b|xy)-p(a \neq b|xy)$ for $x$, $y\in\{0,1\}$.
This family of inequalities has the following two useful properties: first, its maximal quantum violation, $I_{\theta}^{\max} = 2\sqrt{2}\sqrt{1+\beta^2/4}$, is obtained by measuring the state \eqref{Eq:qubits} with measurements:
\begin{equation}\label{Eq:Ibeta2}
\begin{split}
\mathbb{A}_0 = \cos\mu\,\sigma_z + \sin\mu\,\sigma_x, \hspace{1cm} \mathbb{B}_0 = \sigma_z, \\
\mathbb{A}_1 = \cos\mu\,\sigma_z - \sin\mu\,\sigma_x, \hspace{1cm} \mathbb{B}_1 = \sigma_x,
\end{split}
\end{equation}
where $\tan\mu = \sin(2\theta)$.
Second, when maximally violated, the inequality certifies one bit of local randomness on Bob's side for his second measurement choice $y^0=1$: $G(y^0=1,P^{\textrm{max}}_{\textrm{obs}})=1/2$ \cite{Acin2012}. These observations are possible because the maximal violation is \textit{uniquely} achieved by the probability distribution $P^{\textrm{max}}_{\textrm{obs}}$ that arises from the previously-described state and measurements \eqref{Eq:qubits} and \eqref{Eq:Ibeta2}. Therefore, for the maximal violation, $P_{\textrm{obs}}^{\textrm{max}}=P_{\textrm{ext}}$ in \eqref{sumupto} and the guessing probability for input choice $y^0=1$ is equal to $1/2$.

However, in general we may not get correlations that maximally violate our Bell inequality but give a violation that is only close to maximal. In section \ref{Guess} we have shown how to make conclusions about the guessing probability for non-maximal violations. In particular, we showed that for \textit{any} Bell inequality with a unique point of maximal violation, the guessing probability is a continuous function of the value of the inequality close to the maximal violation. This implies in the particular case we are studying that:
\begin{equation}\label{Eq:cont}
	I_{\theta} \rightarrow I_{\theta}^{max} \hspace{0.5cm} \Rightarrow \hspace{0.5cm} G(y^0=1,P_{\textrm{obs}}) \rightarrow \frac{1}{2}.
\end{equation}
In section \ref{Numerics}, we also provide a numerical upper bound on the guessing probability $G(y^0=1,P_{obs})$ by a concave function of the value of $I_{\theta}$. 

Bell inequalities~\eqref{Eq:Ibeta} are the first main ingredient in our sequential construction below. The second one is the use of general, non-projective measurements. Indeed, if $B_1$ performs a projective measurement on the shared entangled state, the resulting post-measurement state, now shared between Alice and $B_{2}$, is separable and thus useless for randomness production. Consequently, one needs to consider non-projective measurements to retain some entanglement in the system for the subsequent measurements. For this purpose, let us introduce the following two-outcome quantum measurement (written in the formalism of Kraus operators):
\begin{equation}\label{MeasKraus}
M_{\pm1}(\xi)=\cos\xi|\pm\rangle\!\langle\pm|+\sin\xi|\mp\rangle\!\langle\mp|
\end{equation}
corresponding to the two outcomes $\{\pm1\}$. This measurement $\hat{\sigma}_x(\xi)\equiv\{M_{+1}^{\dagger}M_{+1},M_{-1}^{\dagger}M_{-1}\}$
can be understood as a generalization of the projective measurement $\sigma_x$. It varies from being projective (for $\xi = 0$) to being non-interacting (for $\xi = \pi/4$). One can verify that measuring an entangled state \eqref{Eq:qubits} for $\xi \in ]0,\pi/4]$ (non-projective measurement) the post-measurement state still retains some entanglement, irrespectively of the outcome. Therefore, by tuning the parameter $\xi$ we are able to vary the destruction of the entanglement of the state at the gain of extracting information from it (cf. Ref. \cite{Silva}): the closer to being a projective measurement, the lower the entanglement in the post-measurement state, but the bigger the violation of the initial Bell inequality.

\section{A scheme for an unbounded amount\\of  nonlocal correlations and certified random numbers}\label{Scheme}
We now combine the previous observations to demonstrate our main result. First, let us recall that, as shown in \cite{Acin2012}, one can obtain one bit of randomness from any pure entangled two qubit state, irrespective of the amount of entanglement in it. Moreover, one can verify that approximately one random bit can be certified if the measurements are close to the ones in Eq. \eqref{Eq:Ibeta2} (in the sense that $\hat{\sigma}_x(\xi)$  is close to a measurement of $\sigma_x$ for $\mathbb{B}_1$ in Eq. \eqref{Eq:Ibeta2}) since $I_{\theta}$ is then close to $I_{\theta}^{\max}$ in Eq. \eqref{Eq:cont}. Second, the measurement in Eq. \eqref{MeasKraus} is only close to projective for $\xi$ close to zero and leaves entanglement in the post-measurement state between Alice and Bob which is thus still useful for randomness certification.
By repeated use of these two properties we can certify the production of an unbounded amount of random bits from a single pair of entangled qubits. We now formally describe this process in which Alice makes a single measurement on her share of the state, whereas Bob makes a sequence of $n$ measurements on his.

Each $B_i$ chooses between measurements of $\sigma_z$ and $\hat{\sigma}_x(\xi_i)$ \eqref{MeasKraus} for inputs $y_{i}=0$ and $y_{i}=1$, respectively, with outcomes $b_i \in \{\pm 1\}$. The parameter $\xi_i$ is fixed before the beginning of the experiment. The initial entangled state shared between Alice and Bob, before $B_1$'s measurement, is $\ket{\psi^{(1)}(\theta_1)}$ (see Eq. \eqref{Eq:qubits} with $\theta = \theta_1$). If the first non-projective measurement of the operator $\hat{\sigma}_x(\xi_1)$ is made by $B_1$ on the initial state $\ket{\psi^{(1)}(\theta_1)}$, the post-measurement state is of the form
\begin{equation}\label{Eq:diagpostsele}
\ket{\psi^{(2)}_{b_1}(\theta_1,\xi_1)}=U_A^{b_1}(\theta_1,\xi_1) \otimes V_B^{b_1}\left(\theta_1,\xi_1)(c\ket{00}+s\ket{11}\right),
\end{equation}
where $c=\cos(\theta_{b_1}(\theta_1,\xi_1))$ and $s=\sin(\theta_{b_1}(\theta_1,\xi_1))$ and the two unitaries, $U_A^{b_1}(\theta_1,\xi_1)$ and $V_B^{b_1}(\theta_1,\xi_1)$, and angle $\theta_{b_1}(\theta_1,\xi_i) \in ]0,\pi/4]$ depend on the first outcome $b_{1}$ and the angles $\theta_1$ and $\xi_1$.

After his measurement, $B_1$ applies the unitary $(V_B^{b_1})^\dagger$, conditioned on his outcome $b_1$, on the post-measurement state going to $B_{2}$. This allows $B_2$ to use the  same two measurements $\hat{\sigma}(\xi_2)$ and $\sigma_z$ independently of the outcome $b_1$ since the unitary $(V_B^{b_1})$ is canceled in \eqref{Eq:diagpostsele}. This last procedure will be applied by each $B_i$ after his measurement, before sending the post-measurement state to the next $B_{i+1}$. If the system passed through \textit{only} the non-projective measurements, the state received by $B_i$ can be one of $2^{i-1}$ potential states, depending on all of the previous $B_j$'s ($j<i$) outcomes (one for each combination $\vec{b}_{i-1} \equiv (b_1,b_2,..,b_{i-1})$ of outcomes obtained by the previous $B_j$, these can be computed \textit{before} the beginning of the experiment). Any of these states can be written as:
\begin{equation}\label{Eq:statei}
\ket{\psi^{(i)}_{\vec{b}_{i-1}}} = U_A^{\vec{b}_{i-1}} \otimes \mathbbm{1}_B \left[\cos(\theta_{\vec{b}_{i-1}})\ket{00}+\sin(\theta_{\vec{b}_{i-1}})\ket{11}\right],
\end{equation}
where the angles $\theta_{\vec{b}_{i-1}}$ and the matrix $U_A^{\vec{b}_{i-1}}$ both depend on the outcomes $\vec{b}_{i-1}$, on the initial angle $\theta_1$ and the angles $\xi_{j}$ of the previous $B_j$'s with $j<i$. In the notation, we will always omit the dependence on the angles $\theta_1$ and $\xi_1,\xi_2,..,\xi_{j}$ since these are fixed \textit{before} the beginning of the experiment. For each of these different potential states with angle $\theta_{\vec{b}_{i-1}}$, Alice adds two measurements to her input choices, where for $k\in\{0,1\}$, these are measurements of the observables $\mathbb{A}_{k}^{\vec{b}_{i-1}}$ which are defined as
\begin{equation}\label{Eq:Alice2Meas}
U_A^{\vec{b}_{i-1}}\left[\cos(\mu_{\vec{b}_{i-1}})\sigma_z+(-1)^{k}\sin(\mu_{\vec{b}_{i-1}})\sigma_x\right](U_A^{\vec{b}_{i-1}})^{\dagger},
\end{equation}
where $\tan(\mu_{\vec{b}_{i-1}})=\sin(2\theta_{\vec{b}_{i-1}})$, depending on the specific state $\ket{\psi^{(i)}_{\vec{b}_{i-1}}}$ \eqref{Eq:statei}.

We are now ready to describe how the scheme certifies randomness.
The measurement operator $\hat{\sigma}_x(\xi_i)$ can be made arbitrarily close to $\sigma_x$ by choosing $\xi_{i}$ sufficiently small. This brings the outcome statistics for measurements $\hat{\sigma}_x(\xi_i), \sigma_z$ on Bob's side and $\mathbb{A}_0^{\vec{b}_{i-1}},\mathbb{A}_1^{\vec{b}_{i-1}}$ on Alice's side on the state in Eq. \eqref{Eq:statei}, arbitrarily close to the statistics for the measurements in Eq. \eqref{Eq:Ibeta2} and a state of the form in Eq. \eqref{Eq:qubits}, for $\theta=\theta_{\vec{b}_{i-1}}$. Therefore, the inequality $I_{\theta_{\vec{b}_{i-1}}}$ for Alice and $B_i$ as defined in \eqref{Eq:Ibeta} can be made arbitrarily close to its maximal violation. This in turn guarantees that the guessing probability, $G(y^0_i=1,P_{obs})$ can be made arbitrarily close to $1/2$.
Note that this guessing probability does not only describe the instances when Alice chooses the measurements $\mathbb{A}_j^{\vec{b}_{i-1}}$. Since Eve does not know Alice's measurement choices in advance she cannot use a strategy that gives higher predictive power for the instances when Alice chooses other measurements. Finally, by making $G(y^0_i=1,P_{obs})$ sufficiently close to $1/2$ for each $i$ (by choosing each $\xi_i$ sufficiently close to $0$) the DIGP $G(y^0_1,y^0_2,..,y_n^0,P_{obs})$ can, by continuity, be made arbitrarily close to $2^{-n}$ (see theorem \ref{SequTheo} of section \ref{Guess}.)

At the end, Bob can produce $m$ random bits by a suitably chosen sequence $\hat{\sigma}_x(\xi_i)$, $i \in \{1,2,..,n\}$, of $n>m$ measurements. The certification only requires that each $B_i$ occasionally chooses the projective measurement $\sigma_z$ so that the whole statistics can be obtained. Note that Bob can choose $\sigma_z$ with probability $\gamma_i$ and $\hat{\sigma}_x(\xi_i)$ with probability $1-\gamma_i$ for $\gamma_i$ as close to zero as he wants. 
Finally, note that the value of \textit{each} inequality $I_{\theta_{\vec{b}_{i-1}}}$ between each $B_i$ and $A$ can be made as close as wanted to the maximal value $I_{\theta_{\vec{b}_{i-1}}}^{\textrm{max}}$. Therefore, we can certify randomness for each measurement $B_i$ in the sequence at the expense of increasing the number of measurements that Alice chooses from.  

This protocol can also be used to certify any finite amount of randomness with some small but strictly non-zero noise robustness. Indeed, assume the goal is to certify $m$ random bits. One can then run the protocol for $m'>m$ bits. By continuity, when adding a small but finite amount of noise the protocol will certify $m$ random bits. Of course, the noise robustness tends to zero with the number of certified random bits. However, we expect this to be the case for any protocol. This conjecture is based on the following argument: each measurement of a particle of finite dimension can produce only a finite amount of randomness. Thus, to get unbounded randomness, an infinite number of measurements are needed. Moreover, a measurement that is very close to non-interacting is unlikely to produce nonlocal correlations and is thus useless to certify randomness. It therefore appears quite likely that, in the infinite limit, any sequence of local measurements that are useful for randomness certification will destroy all the entanglement in the state, so that the resulting noise resistance tends to zero. We therefore expect that, while quantitative improvements over our protocol in terms of noise robustness can be expected, from a qualitative point of view it goes as far as possible.

\section{Numerical bounds on the amount of violation of the family of Bell inequalities of \cite {Acin2012} and the certified randomness}\label{Numerics}

Let us now explain some numerical results that should provide some quantitative intuition on the relation between the amount of violation of the family of inequalities \eqref{Eq:Ibeta} and the amount of random bits certified by this violation. This allows one to evaluate how close the value $I_{\theta}$ of the inequalities \eqref{Eq:Ibeta} should be to the maximal one $I_{\theta}^{max}$ in order to certify close to one perfect random bit from the statistics for one measurement $n = 1$.

Let us consider the following two-parameter class of Bell inequalities:
%
\begin{equation}\label{Ialphabeta}
I_{\alpha,\beta} := \beta\langle \mathbb{B}_0 \rangle + \alpha( \langle \mathbb{A}_0\mathbb{B}_0 \rangle + \langle \mathbb{A}_1\mathbb{B}_0 \rangle) + \langle \mathbb{A}_0\mathbb{B}_1 \rangle - \langle \mathbb{A}_1\mathbb{B}_1 \rangle\leq \beta+
2\alpha
\end{equation}
where $\alpha\geq 1$ and $\beta\geq 0$ such that $\alpha\beta<2$. For $\alpha=1$ the above class reproduces the family of Bell inequalities \eqref{Eq:Ibeta}
with $\beta = 2\cos(2\theta)/[1+\sin^2(2\theta)]^{1/2}$. In \cite{Acin2012} it was proved that the maximal quantum value $I_{\alpha,\beta}^{\textrm{max}}$ for these inequalities is given by:
\begin{equation}\label{maxQ}
I_{\alpha,\beta}^{\textrm{max}} = \sqrt{(1+\alpha^2)(4+\beta^2)}.
\end{equation}

Now, we conjecture that the following inequality is satisfied by 
$I_{\alpha\beta}$:
\begin{equation}\label{nierownosc}
I_{\alpha,\beta}^2+(2-\alpha\beta)^2\langle \mathbb{B}_1\rangle^2\leq (1+\alpha^2)(4+\beta^2).
\end{equation}
We have numerically evaluated this inequality for various values of $\alpha$ and $\beta$ by maximizing its left-hand side over general one-qubit measurements $\mathbb{A}_i=\vec{m}_i\cdot\vec{\sigma}$ and $\mathbb{B}_i=\vec{n}_i\cdot\vec{\sigma}$ with $\vec{m}_i,\vec{n}_i\in\mathbb{R}^3$ such that $|\vec{m}_i|=|\vec{n}_i|=1$ for $i=0,1$, and two-qubit pure entangled states that can always be written as
\begin{equation}
\ket{\psi}=\cos t \ket{00}+\sin t\ket{11}
\end{equation}
with $t\in[0,\pi/2]$ now being independent of $\beta$. The obtained values were always smaller than or equal to the right-hand side of (\ref{nierownosc}). Notice that in the case of Bell scenarios with two dichotomic measurements one can always optimize expression like the above one over qubit measurements and states (see e.g. Ref. \cite{Acin2012}).

From \eqref{nierownosc}, it is easy to obtain an upper bound on the expectation value:
\begin{equation}\label{numupper}
|\langle \mathbb{B}_1 \rangle | \leq \frac{\sqrt{(1+\alpha^2)(4+\beta^2)-I_{\alpha,\beta}^2}}{2-\alpha\beta} = \frac{\sqrt{(I_{\alpha,\beta}^{\textrm{max}})^2-I_{\alpha,\beta}^2}}{2-\alpha\beta},
\end{equation}
which, due to the fact that the right-hand side of the above is a concave function in $I_{\alpha,\beta}$, implies an upper bound on the guessing probability: 
\begin{equation}\label{Pnumupper}
G(y^{0}=1,P_{obs}) \leq \frac{1}{2} + \frac{\sqrt{(I_{\alpha,\beta}^{\textrm{max}})^2-I_{\alpha,\beta}^2}}{2(2-\alpha\beta)} \equiv f(I_{\alpha\beta}).
\end{equation}
In the particular case of maximal violation of the inequality $I_{\alpha\beta}$ \eqref{Ialphabeta} -- which saturates inequality \eqref{nierownosc}, this bound implies that the outcome of the first Bob's measurement is completely unpredictable, $G(y^{0}=1,P_{obs})=1/2$. 
Our numerical bound is thus tight at the maximal quantum violation of the inequality, but also when $I_{\alpha\beta}$ attains its classical value $2\alpha + \beta$, for which $G(y^0 = 1, P_{\textrm{obs}}) = 1$. In general, however, the bound \eqref{Pnumupper} is not tight. Still, it provides a good bound on the guessing probability in terms of the amount of violation of $I_{\alpha\beta}$ \eqref{Ialphabeta} and thus also of the family of inequalities $I_{\theta}$ \eqref{Eq:Ibeta} we were using in our scheme.\\

\begin{figure}[h!]
\scalebox{0.3}{\includegraphics{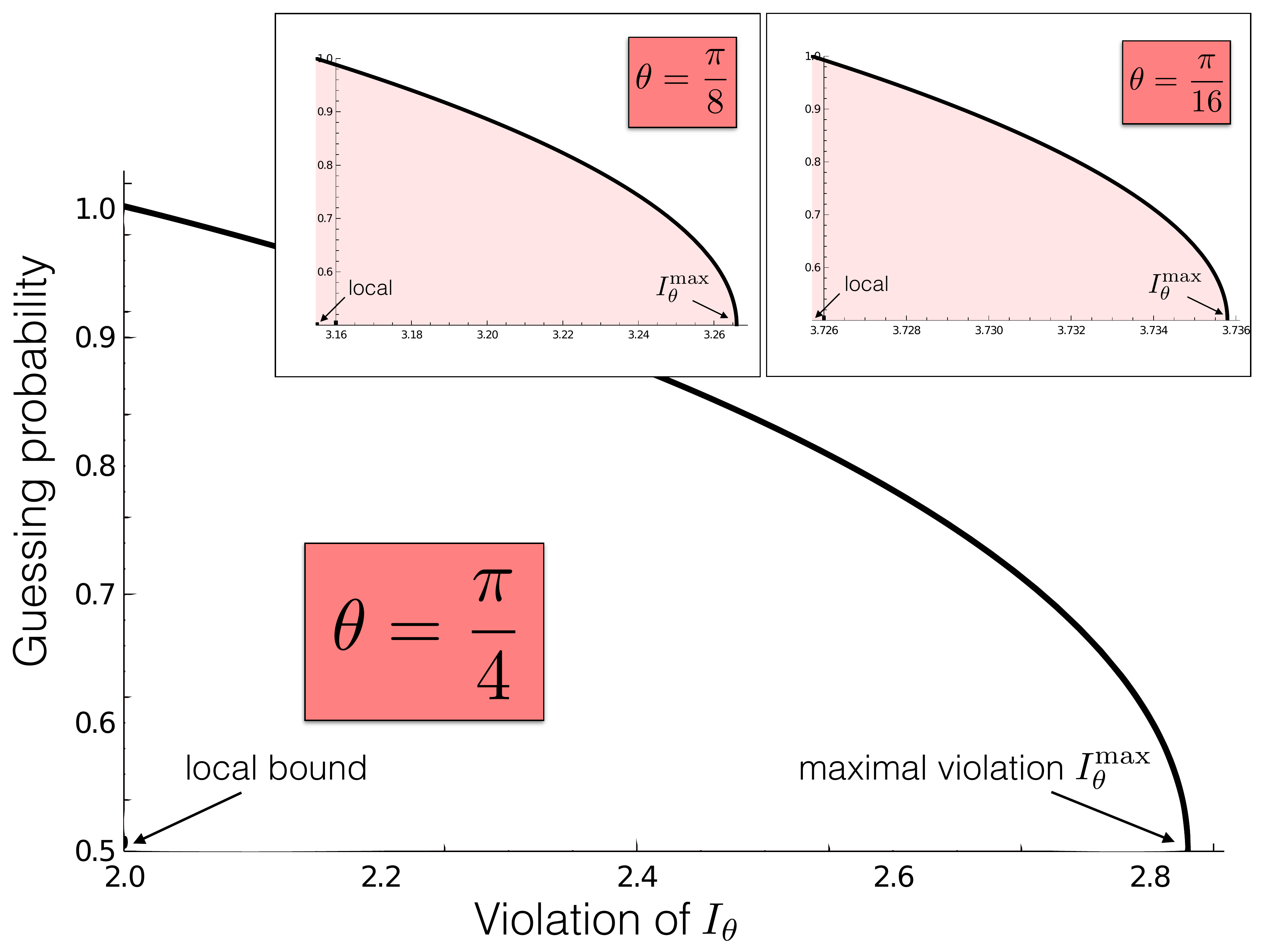}}
\caption{\label{Fig:graphs}
Our numerical upper bounds on the guessing probability in function of the violation of $I_{\theta}$ for $\theta = \frac{\pi}{4},\frac{\pi}{8},\frac{\pi}{16}$, where $I_{\theta = \frac{\pi}{4}} = $ CHSH. One can see that these are tight both at the maximal violation of the inequality and at its local bound.}
\end{figure}

%
For example, one can insert the maximal quantum value $I_\theta^{\textrm{max}}$ \eqref{maxQ} in \eqref{numupper} or in \eqref{Pnumupper} and get that $\langle \mathbb{B}_1 \rangle = 0$ or $G(y^{0}=1,P_{obs})=\frac{1}{2}$, which coincides with the certification of one perfect local random bit for input $y_0 = 1$ on Bob's side for the maximal violation of $I_{\theta}$. 
Since the probability distribution of maximal violation is unique, the point is necessarily an extreme point \cite{Acin2012}, so we can directly use the observed probability distribution $P_{obs}$ to bound the eavesdropper's predictive power (as an extreme point allows only for one decomposition: itself). \\
\\

Let us finally consider the case of $\alpha=1$ and $\beta = 2\cos(2\theta)/[1+\sin^2(2\theta)]^{1/2}$, which results in the Bell inequality \eqref{Eq:Ibeta} considered in the main text.  Figure \ref{Fig:weak} presents the bound (\ref{Pnumupper}) for three values of $\theta$, in particular for $\theta=\pi/4$ which corresponds to the CHSH Bell inequality. This should provide one with an intuition of how close quantitatively to the maximal violation $I_{\theta}^{max}$ the observed value $I_{\theta}$ should be in order to get close to one perfect local bit of randomness ($G(y=1,P_{obs}) \rightarrow 1/2$) for a state with a given angle $\theta$.

\section{The amount of certified randomness as a function of the strength of the measurement}\label{Weak}

We know already that the violation of a Bell inequality certifies the existence of randomness in the outcomes of the measurements. The other way is also true, namely that if the solution of the optimization problem \eqref{sumupto} gives a solution $G(y^{0},P_{obs}) < 1$ then the observed behavior $P_{obs}$ is necessarily nonlocal. On a purely qualitative level, certified randomness in the outcomes is equivalent to nonlocal correlations.\\

In this section we analyze with the help of numerical tools the dependency of the certified randomness from the violation of the family of Bell inequalities \eqref{Eq:Ibeta} on the strength parameter $\xi$ of the measurements $\hat{\sigma}_x(\xi) = cos(2\xi)\sigma_x$ \eqref{MeasKraus}. For example, what is the maximal value of the parameter $\xi$ -- i.e. the minimal strength of the measurement --such that we can generate nonlocal correlations (and thus randomness) from this measurement on an entangled state of the form $\ket{\psi(\theta)}$ \eqref{Eq:qubits}? Do less entangled states need stronger measurement to unveil their nonlocal behavior?\\

To answer these questions, we have been using semi-definite programing (SDP) techniques as explained in \cite{bancal2014more,Nieto} to obtain numerical upper bounds on the guessing probabilities \eqref{sumupto}. One can find the computational details -- presented in a pedagogical way -- online at \url{https://github.com/peterwittek/ipython-notebooks/blob/master/Unbounded_randomness.ipynb}. Here we work in the standard scenario with only one measurement $n = 1$ in the sequence. We used states of the form \eqref{Eq:qubits}:
\begin{equation}\label{psitheta2}
\ket{\psi(\theta)} = \cos(\theta)\ket{00}+\sin(\theta)\ket{11}
\end{equation}
and measurements \eqref{Eq:Ibeta2}:
\begin{equation}\label{measur}
\begin{split}
\mathbb{A}_0 = \cos\mu\,\sigma_z + \sin\mu\,\sigma_x, \hspace{1cm} \mathbb{B}_0 = \sigma_z, \\
\mathbb{A}_1 = \cos\mu\,\sigma_z - \sin\mu\,\sigma_x, \hspace{1cm} \mathbb{B}_1 = \hat{\sigma}_x(\xi) = cos(2\xi)\sigma_x,
\end{split}
\end{equation}
where $tan(\mu) = sin(2\theta)$. These measurements correspond to the ones in our scheme for an unbounded amount of randomness and where the second measurement $y = 1$ of $B$ is the tunable version $\hat{\sigma}_x(\xi)\equiv\{M_{+1}^{\dagger}M_{+1},M_{-1}^{\dagger}M_{-1}\}$ of Eq. \eqref{MeasKraus}: 
\begin{equation}
M_{\pm1}(\xi)=\cos\xi|\pm\rangle\!\langle\pm|+\sin\xi|\mp\rangle\!\langle\mp|,
\end{equation}
with $\xi \in [0,\frac{\pi}{4}]$. For example, if the parameter $\xi = 0$, the four (projective) measurements in Eq. \eqref{measur} on any quantum state $\ket{\psi(\theta)}$ with angle $\theta$ \eqref{psitheta2} generates a behavior $P_{obs}^{\theta}$ leading to the maximal violation of the inequality $I_{\theta}$ \eqref{Eq:Ibeta} for the same value of $\theta$. This implies that extremal nonlocal correlations are generated and from the results of \cite{Acin2012} we know that one perfect random bit -- equivalently $G(y^{0}=1,P^{\theta}_{obs}) = \frac{1}{2}$ -- is produced. This corresponds to the strongest (projective) version of the measurements. Now, as we increase the parameter $\xi > 0$ of $B$'s $y = 1$ measurement, $\hat{\sigma}_x(\xi)$ gets weaker, the generated correlations cease to be extremal and less than one random bit is produced. At some point, at a particular value $\xi^{\theta}_{\textrm{max}}$ the measurement of $B$ is so weak that we expect the generated correlations to become local. This exact value might depend on the amount of entanglement $\theta$ in the state. The bounds obtained by SDP indicate that this dependency on the angle $\theta$ of the maximal value $\xi^{\theta}_{\textrm{max}}$ is relatively small. As we vary the angle $\theta$, the minimal required strength of the measurement to generate a nonlocal behavior $P^{\theta}_{obs}$ stays within a narrow interval: $\xi^{\theta}_{\textrm{max}} \in [0.519,0.576]$ for $\theta \in [\frac{\pi}{32},\frac{\pi}{4}]$.\\

We now present the results in the form of a graph (see Fig.\ref{Fig:weak}). A complete tables with our results for the different states and bounds on the guessing probabilities can be found in the appendices \ref{APP5:SDP_Programs}.

\begin{figure}[h!]
\scalebox{0.6}{\includegraphics{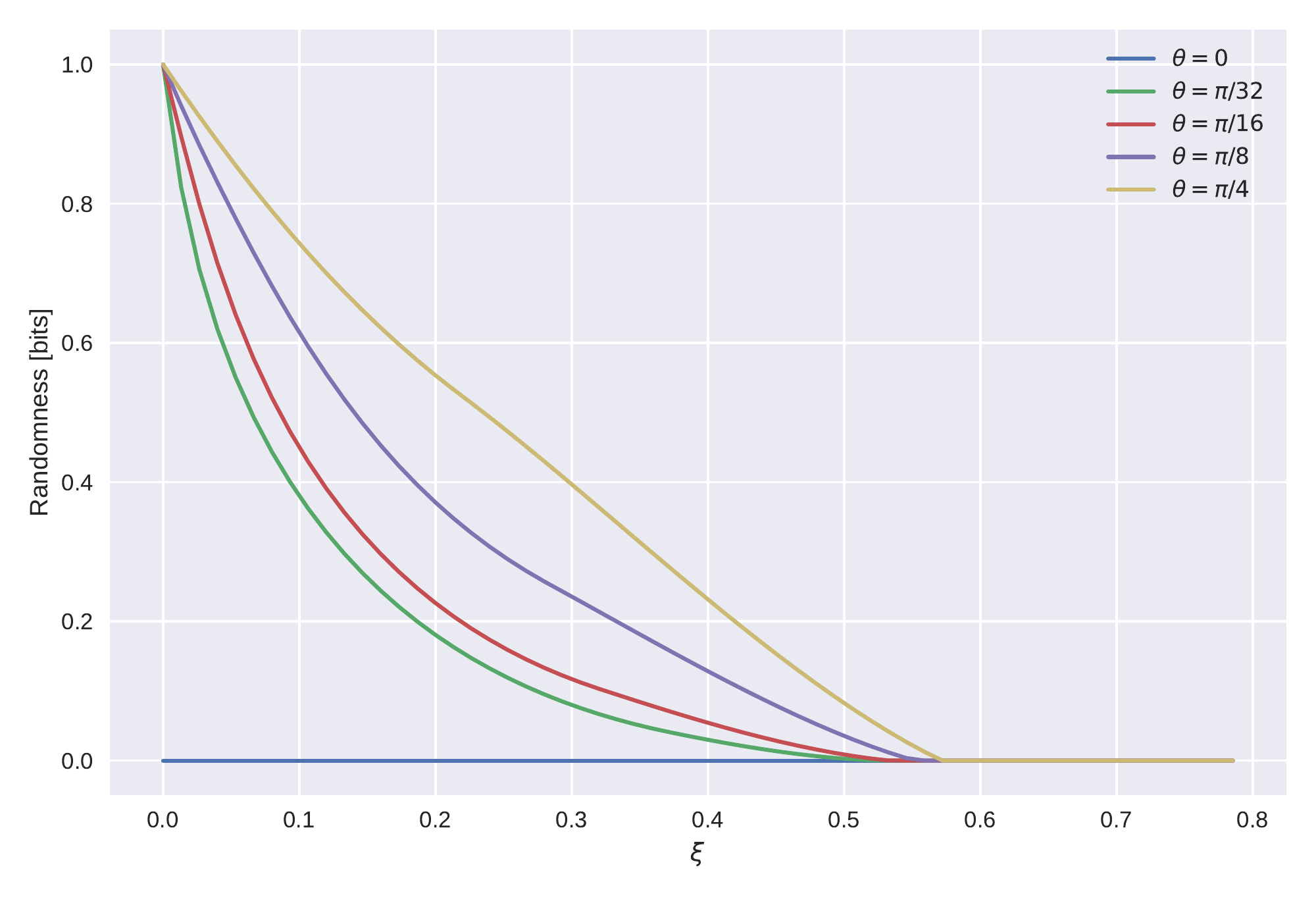}}
\caption{\label{Fig:weak}
Lower bounds on the amount of randomness certified from the quantum state \eqref{Eq:qubits} with angles $\theta = 0,\frac{\pi}{32},\frac{\pi}{16},\frac{\pi}{8},\frac{\pi}{4}$ as function of the strength of the measurement $\xi$. The measurement is projective for $\xi = 0$ -- which certifies the maximal amount of randomness -- and is non interacting with the system when $\xi = \frac{\pi}{4}$. It is intriguing to see that for the cases of $\frac{\pi}{32} \leq \theta \leq \frac{\pi}{4}$ considered the generated behavior become local in a small interval $\xi_{\textrm{max}} \in [0.519,0.576]$.}
\end{figure}

As expected the amount of certified randomness for each state $\ket{\psi(\theta)}$ is one bit when the measurement is projective (for $\xi = 0$) as the correlations are the extremal ones described in \cite{Acin2012} regardless of the entanglement $\theta$ in the state.
As $\xi$ increases the lower bounds on the certified randomness rapidly decreases, with  a more rapid decrease for smaller $\theta$. Interestingly, and up to (high) numerical precision, for all values of $\theta$ the bounds reach zero certified randomness around the same value $\xi_{\textrm{max}} \in [0.519,0.576]$. This indicates, again up to numerical precision, that all the generated $P_{obs}^{\theta}$ become local -- or stop generating randomness -- around this critical value.\\

In the end, we are interested primarily in the amount of certified randomness from $P_{obs}^{\theta}$ \textit{close to} the maximal violation of $I_{\theta}$, corresponding to $\xi \rightarrow 0$. There, the SDP solutions indicate that the correlations resisting the best to the weakening of the measurement $\xi > 0$ are the ones coming from the measurements made on the maximally entangled state. Indeed, if the bounds are close to the actual values of certified randomness it is quite clear from the numerical results that the more the state is entangled ($\theta \rightarrow \frac{\pi}{4}$) the better it resists. The less entangled states ($\theta \rightarrow 0$) appear to generate exponentially less randomness when the parameter $\xi$ increases, or equivalently when the correlations cease to be extremal. This tells us that even though our scheme certifies an unbounded amount of randomness from states $\ket{\psi(\theta)}$ with any nonzero amount of entanglement, i.e. any $\theta > 0$, it is preferential from a practical point of view to use the maximally entangled state as the initial state.


\section{Conclusion}\label{Conclusions} We have presented a scheme for certifying an unbounded amount of random bits from pairs of entangled qubits in the scenario where one of the qubits is subjected to a sequence of measurements. The measurements do not completely destroy the entanglement but map the state to another pure entangled two-qubit state (with reduced entanglement). Our main result made use of the fact that every measurement in Bob's sequence generated an almost-maximally non-local output distribution (in the sense of violating some Bell inequality almost maximally). In Ref. \cite{Silva}, a sequence of non-local correlations is obtained from pairs of qubits, showing that the nonlocality of a state can be shared between many parties. While it also considers sequences of measurements, one can show that the correlations obtained in their work do not generate more certified randomness than the simple standard single measurement scenario. Indeed, the maximum of randomness is achieved when all but one measurements do not interact with the particle and their scheme is thus optimal when coinciding with a single measurement one. In our work, we overcome this limitation by producing (almost) extremal correlations for each measurement in the sequence, which is a fundamental property of potential further use for many other device-independent quantum information tasks (in particular for randomness certification).
Our work is in many respects a proof-of-principle result: First, it requires an exponentially increasing number of measurements on Alice's side, namely $\sum_{i=1}^{n}2^i=2(2^n-1)$ measurement choices for $n$ measurements in the sequence. Second, the result is based on a continuity argument and there is no control on the noise robustness. All these issues deserve further investigation. Finally, it is worth exploring how to design device-independent randomness generation protocols involving sequences of measurements. However, the sequential scenario is much more demanding from an implementation point of view, because it requires quantum non-demolition measurements. It is then unclear whether with present or near future technology sequential protocols will provide a significant practical advantage over simpler protocols based on standard Bell tests. However, the first experimental works observing non-local correlations in the sequential scenario have recently been reported~\cite{schiavon2017three,hu2016experimental}. In any case, the main implications of our work are fundamental: It shows that pairs of pure entangled qubits are potentially unbounded sources of certifiable random bits when performing sequences of measurements on it.\\

We have also provided numerical results that gives us an insight on the resistance to imperfections of a potential protocol that implements our scheme. For a single measurement in the sequence, we have given numerical bounds on how the certified randomness diminishes as the generated correlations cease to be extremal. Second, we have also explored how the certified randomness diminishes when the strength of the measurement is lowering. This allows us to expect that any potential protocol trying to implement our scheme for a finite amount of randomness starting from an entangled system has an advantage in using the maximally entangled one. It is clear from our numerical results that this state offers the best resistance to imperfections. So, while it is true that even arbitrarily little entangled states are a source of unbounded certified randomness, more entanglement offers an advantage in terms of resistance to imperfections.\\

It would also be interesting to explore whether an unbounded amount of randomness can be obtained versus a post-quantum adversary $E$, only constrained by the no-signaling condition, trying to guess the outcomes of the measurements. Or, on the contrary, is the amount of certified randomness against no-signaling adversaries bounded also in the sequential scenario? Our conjecture is that the amount of randomness that can be certified is limited in this case. Indeed, the fact that the no-signaling set -- consisting of all correlations constrained only by the no-signaling conditions -- does not have a continuous set of extremal points (it is a polytope) makes it impossible to obtain a sequence of extremal probability distributions in a sequence as the one that we could obtain in the quantum case. A different approach thus needs to be taken. It is really the fact that the quantum set has curved boundaries made of extremal quantum behaviors that allowed to derive the results of this paper.

\section{Acknowledgments}
This work is supported by the ERC CoG QITBOX and AdG OSYRIS, the AXA Chair in Quantum Information Science, Spanish MINECO (QIBEQI and SEV-2015-0522), Fundaci\'on Cellex, Generalitat de Catalunya (SGR 875 and Cerca Program). M.J. acknowledges support from the Marie Curie COFUND action through the ICFOnest program. R. A. acknowledges funding from the European Union's Horizon 2020 research and innovation programme under the Marie Sk\l{}odowska-Curie grant agreement No 705109. M.J.H. acknowledges support from the EPSRC (through the NQIT Quantum Hub), the FQXi Large Grant Thermodynamic vs information theoretic entropies in probabilistic theories, and the hospitality of the Department of Computer Science at the University of Oxford. P.W. acknowledges computational resources granted by the High Performance Computing Center North (SNIC 2015/1-162 and SNIC 2016/1-320).



\appendix
\section{The guessing probability}\label{APP1:PGgen}

We start our appendices with the following discussion, which is a summary of the work done in deriving the device-independent guessing probability (DIGP) \cite{Randomness, Acin2012,Nieto,DeLaTorre}. A conditional probability distribution that is the outcome distribution for some measurement on a quantum state is called a quantum distribution. For example, a distribution $P$ with elements $p(ab|xy)$ is quantum if there exist at least one quantum state, i.e., a positive semi-definite hermitian unit trace matrix $\rho$ and at least one set of measurements, i.e., a set of positive semi-definite hermitian matrices $M_{a|x}$, $M_{b|y}$ satisfying $\sum_a M_{a|x}=\sum_b M_{b|y}=1$ such that $p(ab|xy)=Tr(M_{a|x}\otimes{M_{b|y}}\cdot \rho)$. We will often abuse notation and refer to a distribution by its elements $p(ab|xy)$ when there is no confusion in doing so.

The set $\mathcal{\mathcal{\mathcal{Q}}}$ of quantum distributions is convex and a distribution in $\mathcal{\mathcal{\mathcal{Q}}}$ that cannot be decomposed as a convex combination of other distributions is called {\it extremal} in $\mathcal{\mathcal{\mathcal{Q}}}$. For a non-extremal distribution $P(ab|xy)$ there is in general more than one possible convex decomposition.

A non-extremal distribution $p(ab|xy)$ with a convex decomposition $p(ab|xy)=\sum_{\lambda}q_{\lambda}p_{\lambda}(ab|xy)$ can be constructed by sampling the different
distributions $p_{\lambda}(ab|xy)$ with probability $q_{\lambda}$. In this case knowledge about the convex decomposition chosen changes the ability of an eavesdropper to correctly guess the outcomes $a$ and/or $b$.

Without knowledge of the decomposition, or for extremal distributions, the probability of correctly guessing the outcome of measurement $y^0$ is $\max_{b}p(b|y^0)$, the probability of the most likely outcome.
With knowledge of the decomposition $p(ab|xy)=\sum_{\lambda}q_{\lambda}p_{\lambda}(ab|xy)$, the probability is larger or equal to $\max_{b}p(b|y^0)$
\begin{eqnarray}\label{noos}
\sum_{\lambda}q_{\lambda}\max_{b}p_{\lambda}(b|y^0)\geq \max_{b}\sum_{\lambda}q_{\lambda}p_{\lambda}(b|y^0)=\max_{b}p(b|y^0).\end{eqnarray}
For a given observed non-extremal distribution $P_{\textrm{obs}}$, it is possible that it was produced by an agent Eve that has larger predictive power than an agent which only observes the outcomes.

We now want to consider the optimal probability for the agent Eve to correctly guess an outcome $b$ of measurement $y^0$ given a distribution $p_{obs}(ab|xy)$ and control over its decomposition in extremal points. If the set of quantum distributions is closed there exist one or several optimal ways to decompose the given distribution that maximizes this probability. If the set is not closed but open or semi-open, there may not exist a maximum and the relevant quantity is instead the supremum value of Eves probability to correctly guess the outcome. Since $\max_{b}p(b|y^0)$ is a continuous function on the set of probability distributions it follows that the supremum value of $\sum_{\lambda}q_{\lambda}\max_{b}p_{\lambda}(b|y^0)$ as a function of all possible decompositions, indexed by $\lambda$, on an open or semi-open set of distributions is the same as the maximum value on the closure of the set. Therefore, in this case we can consider the closure of the set and express the 
 probability as an optimization over the extremal points of this closed set.

With this disclaimer, the maximal probability for the agent Eve to correctly guess an outcome $b$ of measurement $y^0$ given a distribution $p_{obs}(ab|xy)$ and control over the decomposition is the DIGP $G(y^0,P_{\textrm{obs}})$
\begin{equation}\label{gutt}
G(y^0,P_{\textrm{obs}})=\underset{q_{\lambda},p_{\lambda}(ab|xy)}{\textrm{max}}\sum_{\lambda}q_{\lambda}\max_{b}p_{\lambda}(b|y^0).
\end{equation}
where $\lambda$ is labelling the convex decompositions of $p_{\textrm{obs}}(ab|xy)$ in terms of extremal distributions $p_{\lambda}(ab|xy)$.
Note that if $\mathcal{\mathcal{\mathcal{Q}}}$ is not closed a given extremal point may not belong to the set but only to its closure.
For any open interval of $\mathcal{\mathcal{\mathcal{Q}}}$ the function $G(y^0,P_{\textrm{obs}})$ is a concave function \cite{Randomness}.
Therefore this kind of maximization is called a {\it concave roof} construction.

The guessing probability can be approximated by a hierarchy of semidefinite programming (SDP) relaxations~\cite{Nieto,bancal2014more}. We used Ncpol2sdpa~\cite{wittek2015ncpol2sdpa} to generate the relaxations for verifying some of the analytical results. We relied on the arbitrary-precision variant of the SDPA family of solvers~\cite{yamashita2003implementation} for obtaining important numerical values, and the solver Mosek\footnote{\url{http://mosek.com/}} in all other cases.

\section{Continuity of the guessing probability in interior and extremal points of $\mathcal{Q}$}
\label{APP2:ContinuityProof}

The guessing probability as a function on the space of probability distributions is not everywhere continuous. An example of this is that the family of Bell-inequalities of Ref. \cite{Acin2012} that certifies one bit of randomness for measurements on a state with arbitrarily little entanglement. The probability distribution corresponding to such a state and the measurements in Eq. \ref{Eq:Ibeta2} has $G(y^0,P_{obs})=1/2$ and is at the same time arbitrarily close to a distribution corresponding to measurements on a product state with $G(y^0,P_{obs})=1$, i.e., a distribution which can be prepared by a local deterministic procedure. There is thus a discontinuity where the guessing probability jumps from $1/2$ to $1$.  The key to understanding this discontinuity is that the local deterministic distribution is not extremal while the quantum distribution in the neighbouring point is extremal. As seen in Eq. \ref{noos}, the guessing probability is given by different functions depend
 ing on whether a distribution can be decomposed into other distributions or not, i.e., if it is extremal or not. This means discontinuities can appear at the boundary between extremal points and non-extremal points.

We will now show that discontinuities can {\it only} appear at such boundaries between extremal and non-extremal points in the boundary $\partial{\mathcal{Q}}$ of the quantum set $\mathcal{Q}$. To do this we use the property of the guessing probability described in Eq. \ref{noos}, together with some general properties of concave functions and in particular concave roof constructions.

We want to show that the following propositions are true:

\begin{proposition}\label{prop1}
The function $G(y^0,P_{\textrm{obs}})$ on the set of quantum distributions $\mathcal{Q}$ is continuous in the interior of $\mathcal{Q}$.
\end{proposition}
\begin{proposition}\label{prop2}
The function $G(y^0,P_{\textrm{obs}})$ is continuous in any extremal point of $\mathcal{Q}$.
\end{proposition}

Proposition \ref{prop1} is trivial.
The guessing probability $G(y^0,P_{\textrm{obs}})$ is concave by definition and any concave function is continuous on an open subset of its domain \cite{rock}. In particular this means that $G(y^0,P_{\textrm{obs}})$ is continuous in the interior of $\mathcal{Q}$. Note that if $\mathcal{Q}$ is open, i.e. has no boundary, there can thus not exist any discontinuity.

To address proposition \ref{prop2} we consider the restriction $G(y^0,P_{\textrm{obs}})^{\partial{\mathcal{Q}}}$ of $G(y^0,P_{\textrm{obs}})$ to the boundary $\partial{\mathcal{Q}}$ of the quantum set. First we note that the function $G(y^0,P_{\textrm{obs}})^{\partial{\mathcal{{Q}}}}$ by definition is continuous on any open set of extremal points since $\max_{b}p(b|y)$ is a continuous function.
Next we observe that the boundary $\partial{\mathcal{{Q}}}$ can be decomposed into a collection of open sets of extremal points and a collection $\{S_i\}$ of closed connected possibly overlapping sets where each set is the closure of a maximal open connected subset. A maximal open connected subset $M$ of the non-extremal points is an open set such that any other open connected set of non-extremal points which contains $M$ is $M$ itself. Therefore, each set $S_i$ is the convex hull of the set of extremal points in its closure.

Any closed set $S_i$ has a boundary $\partial S_i$ with the rest of $\partial{\mathcal{{Q}}}$ which can be decomposed in the same way into open sets of extremal points and closed connected sets $S_{ij}$ that are closures of maximal open connected sets of non-extremal points.
The boundary  $\partial S_{ij}$ of $S_{ij}$ with the rest of $\partial S_i$ is in turn decomposable in the same way.

Continuing this successive decomposition of the boundary $\partial \mathcal{Q}$ we will eventually reach sets $S_{ijk\dots}$ that are one dimensional simplexes, or alternatively sets with only extremal points in the boundary. On sets of these two types $G(y^0,P_{\textrm{obs}})$ is a continuous function. To see this we introduce the following terminology, and use a theorem from Ref. \cite{buci}.

A function for which all discontinuities are such that the function takes the higher value at a closed set and the lower value at an open set is called {\it upper semi-continuous}.

The function $G(y^0,P_{\textrm{obs}})^{S}$ defined on a closed convex set $S$ can be viewed as an extension of $G(y^0,P_{\textrm{obs}})^{\partial{S}}$ to the interior of $S$. This extension is called the {\it concave roof extension}.

\begin{theorem}\label{theo1}{Let C be a compact set and $K=co(C)$ be the convex hull of C. If $F:C\to{\mathbb{R}}$ is bounded, upper semi-continuous, and concave on C, then the concave roof extension $\hat{F}:K\to\mathbb{R}$ of F to K is upper semi-continuous \cite{buci}.}
\end{theorem}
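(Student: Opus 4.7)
The plan is to establish upper semi-continuity of $\hat{F}$ on $K$ by combining Carath\'eodory's theorem with a compactness argument applied to a suitable parametrization of the admissible convex decompositions. Working in the ambient finite-dimensional affine space (the relevant setting here, since probability distributions on finite alphabets live in $\mathbb{R}^{d}$ for some $d$), Carath\'eodory's theorem lets me write every $x \in K=\mathrm{co}(C)$ as $x=\sum_{i=1}^{d+1}\lambda_{i}c_{i}$ with $c_{i}\in C$ and $(\lambda_{i})$ in the standard simplex $\Delta_{d+1}\subset\mathbb{R}^{d+1}$. Hence I may rewrite
\begin{equation}
\hat{F}(x)=\sup_{(\lambda,c)\in\pi^{-1}(x)}\Phi(\lambda,c),\qquad \Phi(\lambda,c):=\sum_{i=1}^{d+1}\lambda_{i}F(c_{i}),
\end{equation}
where $\pi:\Delta_{d+1}\times C^{d+1}\to K$, $\pi(\lambda,c):=\sum_{i}\lambda_{i}c_{i}$, is continuous.

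The key technical step is to show that $\Phi$ is upper semi-continuous on the compact space $\Delta_{d+1}\times C^{d+1}$. I would analyze each summand $(\lambda_{i},c_{i})\mapsto\lambda_{i}F(c_{i})$: at points with $\lambda_{i}>0$ this follows from USC of $F$ combined with continuity of multiplication by a (nearby) positive scalar, while at points with $\lambda_{i}=0$ the boundedness of $F$ forces nearby products $\lambda_{i}^{(n)}F(c_{i}^{(n)})$ to stay small, so the summand cannot jump upward. A finite sum of USC bounded functions is again USC, so $\Phi$ attains its supremum on the closed (hence compact) fiber $\pi^{-1}(x)$. Consequently the supremum defining $\hat{F}(x)$ is actually a maximum, attained at some $(\lambda^{\star}(x),c^{\star}(x))\in\pi^{-1}(x)$.

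To conclude, I would take any sequence $y_{n}\to x$ in $K$ and choose maximizers $(\lambda^{(n)},c^{(n)})\in\pi^{-1}(y_{n})$ with $\Phi(\lambda^{(n)},c^{(n)})=\hat{F}(y_{n})$. Passing to a subsequence along which $\hat{F}(y_{n_{k}})\to\limsup_{m}\hat{F}(y_{m})$, compactness of $\Delta_{d+1}\times C^{d+1}$ allows a further subsequence with $(\lambda^{(n_{k})},c^{(n_{k})})\to(\lambda^{\star},c^{\star})$. Continuity of $\pi$ forces $\pi(\lambda^{\star},c^{\star})=x$, so $(\lambda^{\star},c^{\star})$ is an admissible decomposition of $x$, and USC of $\Phi$ then yields
\begin{equation}
\limsup_{m}\hat{F}(y_{m})=\lim_{k}\Phi(\lambda^{(n_{k})},c^{(n_{k})})\leq\Phi(\lambda^{\star},c^{\star})\leq\hat{F}(x),
\end{equation}
which is exactly upper semi-continuity of $\hat{F}$ at $x$.

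The main obstacle I anticipate is verifying that $\Phi$ is upper semi-continuous at boundary points of $\Delta_{d+1}\times C^{d+1}$ where some weight $\lambda_{i}$ vanishes. There $\lambda_{i}F(c_{i})=0$ regardless of $c_{i}$, yet in an approximating sequence the small positive weights $\lambda_{i}^{(n)}$ are multiplied by values $F(c_{i}^{(n)})$ that USC alone does not pin down; it is precisely here that the boundedness hypothesis on $F$ is indispensable to kill the product. The concavity of $F$ plays no direct role in the USC argument, but is used implicitly in calling $\hat{F}$ a \emph{concave roof extension}: together with USC on $C$ it ensures $\hat{F}|_{C}=F$, so that $\hat{F}$ genuinely extends $F$.
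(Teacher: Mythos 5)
The paper offers no proof of this theorem: it is quoted directly from Bucicovschi and Lebl \cite{buci}, so there is no in-paper argument to compare yours against, and your proposal functions as a self-contained substitute for that citation. As such it is essentially correct, and valid in the finite-dimensional setting the paper actually needs (distributions over finite alphabets in $\mathbb{R}^d$). The architecture is sound: parametrize decompositions by the compact space $\Delta_{d+1}\times C^{d+1}$, show $\Phi(\lambda,c)=\sum_i \lambda_i F(c_i)$ is upper semi-continuous there --- with boundedness of $F$ doing exactly the work you identify at vanishing weights $\lambda_i=0$, and also quietly controlling the cross term $(\lambda_i^{(n)}-\lambda_i)F(c_i^{(n)})$ when $\lambda_i>0$ --- and then run the standard limsup/subsequence argument for a supremum over compact fibers of a continuous map. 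Your closing observation that concavity of $F$ is never used for upper semi-continuity of $\hat F$, only to guarantee $\hat F|_C=F$, is also accurate (in fact USC is not needed for that identity either: concavity gives $\hat F(c)\leq F(c)$ and the trivial decomposition gives the reverse), so your argument proves a marginally stronger statement than the one quoted.

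The one step you assert without justification is the identification of the concave roof $\hat F$ with the supremum over $(d+1)$-term decompositions. Carath\'eodory guarantees that every $x\in K$ admits \emph{some} $(d+1)$-term decomposition, i.e.\ that $\pi^{-1}(x)\neq\emptyset$, but the roof is defined as a supremum over \emph{all} finite decompositions, and you must check that truncating to $d+1$ terms does not lower that supremum. This is true but needs an argument: for a fixed finite support $c_1,\dots,c_N$, maximizing $\sum_i\lambda_i F(c_i)$ subject to $\sum_i\lambda_i c_i=x$, $\sum_i\lambda_i=1$, $\lambda\geq 0$ is a linear program whose feasible polytope is cut out by $d+1$ equality constraints, so its optimum is attained at a basic feasible solution with at most $d+1$ nonzero weights; equivalently, apply Carath\'eodory in the lifted space $\mathbb{R}^{d+1}$ to the points $\left(c_i,F(c_i)\right)$, which yields at most $d+2$ terms --- and replacing $d+1$ by $d+2$ changes nothing in your compactness argument. (The reduction is classical; see Rockafellar \cite{rock}, Corollary 17.1.5, for the convex-envelope version.) With that lemma inserted, your proof is complete.
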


The guessing probability is bounded and concave by definition.
If the boundary of $S$ has only extremal points it follows that $G(y^0,P_{\textrm{obs}})^{\partial{S}}$ is continuous in $\partial{S}$ and by theorem \ref{theo1} $G(y^0,P_{\textrm{obs}})^{S}$ is upper semi-continuous on $S$. Moreover, since $G(y^0,P_{\textrm{obs}})^{S}$ is concave it cannot have an upper semi-continuous discontinuity between the boundary and the interior.
If $S$ is a one-dimensional simplex we can, if necessary, restrict the domain of the guessing probability to a one dimensional subspace and make the same argument.

Next we consider discontinuities between $S$ and an open set of extremal points.

\begin{lemma}\label{theo2}Any discontinuity of $G(y^0,P_{\textrm{obs}})$ between a closed set and an open set of extremal points is upper semi-continuous.
\end{lemma}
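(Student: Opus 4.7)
My plan is to exploit the fact that on extremal points the guessing probability reduces to the simple continuous function $\max_b p(b|y^0)$, while on the closed non-extremal region it is given by the concave roof extension, which always dominates this simple function. Comparing limits from the two sides then forces any jump to put the larger value on the closed side, which is exactly upper semi-continuity.

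Concretely, first I would recall from equation \eqref{noos} that for any probability distribution $P$,
\begin{equation}
G(y^0,P) \;\geq\; \max_b p(b|y^0),
\end{equation}
with equality whenever $P$ is extremal (since then the only convex decomposition is the trivial one). Thus on the open set $U$ of extremal points in $\partial\mathcal{Q}$ the guessing probability is literally $\max_b p(b|y^0)$, and this is a continuous function of $P$ because each $p(b|y^0)$ is a coordinate and the maximum of finitely many continuous functions is continuous. On the closed set $S$, by contrast, $G$ is the concave roof extension of its restriction to the extremal points in $\partial S$, so a priori it may take values strictly greater than $\max_b p(b|y^0)$.

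Next I would take an arbitrary boundary point $P^\ast$ lying in the intersection of $S$ with the closure of $U$. Since $S$ is closed, $P^\ast \in S$ and therefore $G(y^0,P^\ast)$ is computed by the concave roof on $S$, so the inequality above gives $G(y^0,P^\ast) \geq \max_b p^\ast(b|y^0)$. Now let $P_n \in U$ be any sequence of extremal points with $P_n \to P^\ast$. Because $G$ equals $\max_b p(b|y^0)$ on $U$ and this function is continuous,
\begin{equation}
\lim_{n\to\infty} G(y^0,P_n) \;=\; \max_b p^\ast(b|y^0) \;\leq\; G(y^0,P^\ast).
\end{equation}
Taking the supremum over all such sequences yields $\limsup_{P \to P^\ast,\, P \in U} G(y^0,P) \leq G(y^0,P^\ast)$, which is precisely the statement that any discontinuity at $P^\ast$ assigns the larger value to the closed side $S$ and the smaller value to the open side $U$ — i.e.\ the discontinuity is upper semi-continuous.

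The only delicate point, and the one I would spend most care on, is the topological bookkeeping: I need to make sure $P^\ast$ really is realized as a boundary point of $S$ in $\partial\mathcal{Q}$ (and hence in particular lies in $S$ and not only in $U$), so that $G(y^0,P^\ast)$ is indeed given by the concave roof rather than by the extremal formula. This follows from the construction of the $\{S_i\}$ preceding the lemma, where each $S_i$ was defined as the \emph{closure} of a maximal open connected set of non-extremal points; then $P^\ast \in \partial S \subset S$ holds automatically, and the argument above applies uniformly at every point of the interface between $S$ and $U$.
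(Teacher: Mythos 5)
Your proof is correct and takes essentially the same route as the paper's: both arguments rest on the equality $G(y^0,P_{\textrm{obs}})=\max_{b}p(b|y^0)$ at extremal points (where this is a continuous function) combined with the inequality of Eq.~\eqref{noos} at non-extremal points of the closed set, forcing any jump to place the larger value on the closed side. Your sequence/limsup formulation and the closing remark that $P^\ast\in\partial S\subset S$ simply make explicit what the paper's two-line case analysis leaves implicit.
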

\begin{proof}
If the boundary point of the closed set is extremal the $G(y^0,P_{\textrm{obs}})$ is continuous since $\max_{b}p(b|y^0)$ is continuous. Next consider a non-extremal boundary point of the closed set. $G(y^0,P_{\textrm{obs}})$ in the non-extremal point is always greater or equal to $\max_{b}P(b|y^0)$ by Eq. \ref{noos}. Thus any discontinuity is upper semi-continuous.
\end{proof}
If there is a discontinuity of $G(y^0,P_{\textrm{obs}})$ on the boundary of $S$ it is, by lemma \ref{theo2} , upper semi-continuous and at a set of non-extremal points.

By repeated application of Theorem \ref{theo1} and lemma \ref{theo2} we can conclude that $G(y^0,P_{\textrm{obs}})^{\partial{\mathcal{{Q}}}}$ is upper semi-continuous on $\partial{\mathcal{{Q}}}$ and that $G(y^0,P_{\textrm{obs}})$ is upper semi-continuous on ${\mathcal{{Q}}}$. Since $G(y^0,P_{\textrm{obs}})$ is concave there cannot be an upper semi-continuous discontinuity between the boundary $\partial{\mathcal{{Q}}}$ and the interior of $\mathcal{{{Q}}}$. Thus the only discontinuities are between non-extremal points in closed subsets of $\partial{\mathcal{{Q}}}$ and extremal points in open subsets of $\partial{\mathcal{{Q}}}$.

\section{Bounds on the guessing probability as a function of a Bell inequality: Continuity at a unique point of maximal violation}
\label{APP6:ContAtPextUnique}

We have described the guessing probability as a function on set of quantum distributions, but it is sometimes useful to consider it as a function of the violation of some given Bell inequality $I$.
A Bell expression is a linear function on the space of distributions and the set of distributions for which it takes a given value $t$ is a hyper-plane $H_{t}$. The different values of the Bell expression thus defines a family of parallel hyperplanes.

On each hyperplane $H_{t}$ we can consider the restriction $G(y^0,P_{\textrm{obs}})_t$ of $G(y^0,P_{\textrm{obs}})$ to the intersection of $H_{t}$ with $\mathcal{{{Q}}}$ and take its maximum $\max G(y^0,P_{\textrm{obs}})_t$ on this intersection. This maximum is the highest probability for Eve to guess the outcome of $y^0$ for any distribution $P\in{\mathcal{{Q}}}$ such that $I(P)=t$. The function $\max G(y^0,P_{\textrm{obs}})_t$ can have a discontinuity at $t=t_c$ only if $H_{t_c}$ intersects with a point in $\mathcal{{{Q}}}$ at which ${G}(y^0,P_{\textrm{obs}})$ is discontinuous.

Let us consider a Bell expression $I$ and its maximal value $t_{max}$ on $\mathcal{\mathcal{\mathcal{Q}}}$.
If the intersection of $H_{t_{max}}$ and $\mathcal{\mathcal{\mathcal{Q}}}$ is a single extremal point
it follows from Propositions \ref{prop1} and \ref{prop2} that there is a $t_c\neq t_{max}$ such that for the range  $t_{c}\leq t\leq t_{max}$ for which $\max G(y^0,P_{\textrm{obs}})_t$ is a continuous function of $t$.

If the intersection of $H_{t_{max}}$ and $\mathcal{\mathcal{\mathcal{Q}}}$ contains more than one extremal point it also contains a set of non-extremal points of $\partial \mathcal{\mathcal{\mathcal{Q}}}$ and $G(y^0,P_{\textrm{obs}})$ could have a discontinuity between this set and an open set of extremal points. This discontinuity could lead to a discontinuity of the function $\max G(y^0,P_{\textrm{obs}})_t$ at $t_{max}$.

\section{Guessing probability for a sequence}\label{APP3:PGforSequ}

So far, we have discussed the continuity properties of the guessing probability in the standard scenario, where one single measurement $M_{a|x}$ is made on Alice's side and $M_{b|y}$ on Bob's. The goal of this section is to extend these properties to the case where sequential measurements $M_{a_i|x_i}$ and $M_{b_i|y_i}$ are performed by each party, where $i$ labels the position of a particular measurement in the sequence.

Let us consider a sequence of measurements $\hat{\sigma}(\xi_i)$ chosen by Bob and denote $(\xi_1,\xi_{2},\dots{,\xi_{n}})\equiv{\vec{\xi}}$. The convex decomposition of the observed outcome distribution that gives Eve optimal probability to correctly guess the sequence of outcomes $\vec{b}_n$ of the measurements $(y^0_1,y^0_{2},\dots{,y^0_{n}})\equiv{\vec{y}^0_n}$ is a function of $\vec{\xi}$. The guessing probability $G(\vec{y}^0_n,P_{\textrm{obs}})$ is thus given by
\begin{equation}\label{gutt2}G(\vec{y}^0_n,P_{\textrm{obs}})=\sum_{\lambda_{\bar{\xi}}}q_{\lambda_{\vec{\xi}}}\max_{\vec{b}_{n}}p_{\lambda_{\vec{\xi}}}(b_1|y^0_1)\cdot p_{\lambda_{\vec{\xi}}}(b_2|y^0_2,y^0_1,b_1)\dots p_{\lambda_{\vec{\xi}}}(b_n|\vec{y}^0_n\vec{b}_{n-1}).
\end{equation}
where the extremal distributions $p_{\lambda_{\vec{\xi}}}(b_n|y_n\dots)$ and weights $q_{\lambda_{\vec{\xi}}}$ of the optimal convex decomposition are functions of $\vec{\xi}$ as indicated by the index $\lambda_{\vec{\xi}}$.
Let us assume that a term which appears in the convex combination is
\begin{equation}\label{kutt}
q_{\lambda_{\vec{\xi}}}p_{\lambda_{\vec{\xi}}}(b_1|y^0_1)\dots p_{\lambda_{\vec{\xi}}}(b_n|{\vec{y}^0_n}\vec{b}_{n-1}).
\end{equation}
Thus we assume that it corresponds to the most probable sequence of outcomes $\vec{b}_{n}$ for a specific distribution indexed by $\lambda_{\vec{\xi}}$.

Given that Eve has chosen the optimal convex decomposition for guessing the outcomes of $\vec{y}^0_n$ we consider her probability of correctly guessing the outcome of $y^0_m$ for $1\leq m\leq{n}$ given a particular sequence of previous outcomes $\vec{b}_{m-1}$. It is given by
\begin{equation}\label{klut}
\sum_{\lambda_{\vec{\xi}}}k_{\lambda_{\vec{\xi}}}\max_{b_m}p_{\lambda_{\vec{\xi}}}(b_m|{\vec{y}^0_m}\vec{b}_{m-1}),
\end{equation}
where $k_{\lambda_{\vec{\xi}}}$ is the probability that the distribution indexed by $\lambda_{\vec{\xi}}$ will be sampled given the sequence of previous outcomes $\vec{b}_{m-1}$
\begin{equation}\label{nut}
k_{\lambda_{\vec{\xi}}}=\frac{q_{\lambda_{\vec{\xi}}}p_{\lambda_{\vec{\xi}}}(b_1|y^0_1)\dots p_{\lambda_{\vec{\xi}}}(b_{m-1}|{\vec{y}^0_{m-1}}\vec{b}_{m-2})}{\sum_{\lambda_{\vec{\xi}}}q_{\lambda_{\vec{\xi}}}p_{\lambda_{\vec{\xi}}}(b_1|y^0_1)\!\dots\! p_{\lambda_{\vec{\xi}}}(b_{m-1}|{\vec{y}^0_{m-1}}\vec{b}_{m-2})}.
\end{equation}

The probability in Eq. \ref{klut} is larger or equal to $1/d_m$, where $d_m$ is the number of possible outputs $b_m$, but is lower or equal to $G(y^0_m,P_{\textrm{obs}})$, the
maximal probability that Eve could guess the outcome of $y^0_m$ correctly given that she had chosen an optimal strategy for this and not the optimal strategy for guessing the outcomes of the sequence $\vec{y}^0_n$. Thus if $G(y^0_m,P_{\textrm{obs}})$ is close to $1/d_m$ so is the expression in Eq. \ref{klut}.

\section{Arbitrarily close to $n$ random bits for $n$ measurements}\label{APP4:ArbitrarySequ}

We want to prove that $G({\vec{y}^0_n},P_{\textrm{obs}})$ can be made arbitrarily close to $2^{-n}$ by making $G(y^0_m,P_{\textrm{obs}})$ sufficiently close to 1/2 for each $1\leq m\leq n$.

The proof relies on the fact that if a convex combination of a collection of numbers $x_i$ equals $a$, i.e., $\sum_ik_ix_i=a$ where $\sum k_i=1$, and if $x_i\geq{a}$ for each $i$, it follows that for every $i$ either $k_i=0$ or $x_i=a$.

From this follows that when $G(y^0_m,P_{\textrm{obs}})$ is very close to 1/2 either $\max_{b_m}p_{\lambda_{\vec{\xi}}}(b_m|{\vec{y}^0_m}\vec{b}_{m-1})$ in Eq. \ref{klut} is very close to 1/2 or $k_{\lambda_{\vec{\xi}}}$ is very close to zero for each $\lambda_{\vec{\xi}}$.
To see this more clearly we construct the following bound

\begin{eqnarray*}k_{\lambda_{\vec{\xi}}}\max_{b_m}p_{\lambda_{\vec{\xi}}}(b_m|{\vec{y}^0_m}\vec{b}_{m-1})&\leq&G(y^0_m,P_{\textrm{obs}})-\sum_{\lambda'\neq\lambda}k_{\lambda'_{\vec{\xi}}}\max_{b_m}p_{\lambda'_{\vec{\xi}}}(b_m|{\vec{y}^0_m}\vec{b}_{m-1})\nonumber\\
&\leq&{G(y^0_m,P_{\textrm{obs}})-1/2(1-k_{\lambda_{\vec{\xi}}})}\end{eqnarray*}
where we used $\max_{b_m}p_{\lambda'_{\vec{\xi}}}(b_m|{\vec{y}^0_m}\vec{b}_{m-1})\geq{1/2}$ for each $\lambda'_{\vec{\xi}}$ and $\sum_{\lambda'\neq\lambda}k_{\lambda'_{\vec{\xi}}}=1-k_{\lambda_{\vec{\xi}}}$.
It follows that
\begin{eqnarray*}
G(y^0_m,P_{\textrm{obs}})-1/2\geq k_{\lambda_{\vec{\xi}}}[\max_{b_m}p_{\lambda_{\vec{\xi}}}(b_m|{\vec{y}^0_m}\vec{b}_{m-1})-1/2],
\end{eqnarray*}
and given Eq. \eqref{nut} this implies

\begin{eqnarray*}
G(y^0_m,P_{\textrm{obs}})-1/2\geq q_{\lambda_{\vec{\xi}}}p_{\lambda_{\vec{\xi}}}(b_1|y^0_1)\dots p_{\lambda_{\vec{\xi}}}(b_{m-1}|{\vec{y}^0_{m-1}}\vec{b}_{m-2}){[\max_{b_m}p_{\lambda_{\vec{\xi}}}(b_m|{\vec{y}^0_n}\vec{b}_{m-1})-1/2]}.
\end{eqnarray*}
Thus for sufficiently small $G(y^0_m,P_{\textrm{obs}})-1/2$ either $\max_{b_m}p_{\lambda_{\vec{\xi}}}(b_m|{\vec{y}^0_m}\vec{b}_{m-1})-1/2$ can be made arbitrarily small,
or the probability $q_{\lambda_{\vec{\xi}}}p_{\lambda_{\vec{\xi}}}(b_1|y^0_1)\dots p_{\lambda_{\vec{\xi}}}(b_{m-1}|{\vec{y}^0_{m-1}}\vec{b}_{m-2})$ that the distribution labelled by $\lambda_{\vec{\xi}}$ is sampled when $y^0_m$ is measured is made arbitrarily small.

The argument can be made for any $B_m$. For $B_1$, it follows that either $p_{\lambda_{\vec{\xi}}}(b_1|y^0_1)$ is made arbitrarily close to $1/2$ or $q_{\lambda_{\vec{\xi}}}$ is made arbitrarily close to $0$.
For $B_2$, it follows that either $p_{\lambda_{\vec{\xi}}}(b_2|y^0_2y^0_1b_1)$ is made arbitrarily close to $1/2$ or $q_{\lambda_{\vec{\xi}}}p_{\lambda_{\vec{\xi}}}(b_1|y^0_1)$ is made arbitrarily close to zero. Given the second option and that $p_{\lambda_{\vec{\xi}}}(b_1|y^0_1)$ is made arbitrarily close to $1/2$ it is implied that that $q_{\lambda(\vec{\xi})}$ is made arbitrarily close to 0. If on the other hand $p_{\lambda_{\vec{\xi}}}(b_1|y^0_1)$ is not very close to $1/2$ it follows that $q_{\lambda_{\vec{\xi}}}$ is made arbitrarily close to zero by the preceding argument.

By induction it is clear that either the term in Eq. \ref{kutt} satisfies that $p_{\lambda_{\vec{\xi}}}(b_1|y^0_1)\dots p_{\lambda_{\vec{\xi}}}(b_n|{\vec{y}^0_n}\vec{b}_{n-1})$ can be made arbitrarily close to ${2^{-n}}$ or alternatively $q_{\lambda_{\vec{\xi}}}$ is made arbitrarily small. Since the same is true for every $\lambda_{\vec{\xi}}$ in Eq. \ref{gutt2} it follows that $G({\vec{y}^0_n},P_{\textrm{obs}})$ can be made arbitrarily close to ${2^{-n}}$.

Note that the above argument can be straightforwardly extended to the case where the number of outputs $d_i$ for each $B_i$ can be different from 2. Thus, in this case $G({\vec{y}^0_n},P_{\textrm{obs}})$ can be made arbitrarily close to $\prod_{i=1}^nd_i^{-1}$ by making $G(y^0_m,P_{\textrm{obs}})$ sufficiently close to $1/d_m$ for each $1\leq m\leq n$.

\section{Our programs to obtain lower bounds on the certified randomness}\label{APP5:SDP_Programs}

In this section of the appendices we give the tables of results for section \ref{Weak}. We remind the reader that the computational details -- exposed in a pedagogical way -- of our results can be found online at:\\

\url{https://github.com/peterwittek/ipython-notebooks/blob/master/Unbounded_randomness.ipynb}\\


\begin{table}
\parbox{.45\linewidth}{
\centering
\caption{$\theta = \frac{\pi}{4}$, the maximally entangled state}
\begin{tabular}{|| c c ||}
\hline
$\xi$ & $\sharp$ random bits\\
\hline
0.000 & 1.000\\
0.013 & 0.962\\
0.027 & 0.925\\
0.040 & 0.890\\
0.053 & 0.855\\
0.067 & 0.822\\
0.080 & 0.790\\
0.093 & 0.759\\
0.106 & 0.729\\
0.120 & 0.700\\
0.133 & 0.673\\
0.146 & 0.647\\
0.160 & 0.622\\
0.173 & 0.598\\
0.186 & 0.575\\
0.200 & 0.554\\
0.213 & 0.533\\
0.226 & 0.514\\
0.240 & 0.494\\
0.253 & 0.473\\
0.266 & 0.452\\
0.280 & 0.430\\
0.293 & 0.409\\
0.306 & 0.387\\
0.319 & 0.365\\
0.333 & 0.342\\
0.346 & 0.320\\
0.359 & 0.298\\
0.373 & 0.276\\
0.386 & 0.254\\
0.399 & 0.233\\
0.413 & 0.211\\
0.426 & 0.190\\
0.439 & 0.170\\
0.453 & 0.150\\
0.466 & 0.130\\
0.479 & 0.111\\
0.493 & 0.093\\
0.506 & 0.075\\
0.519 & 0.058\\
0.532 & 0.042\\
0.546 & 0.027\\
0.559 & 0.012\\
0.572 & 0.000 \\
\hline
\end{tabular}
}
\hspace{0.5cm}
\parbox{.45\linewidth}{
\centering
\caption{$\theta = \frac{\pi}{8}$}
\begin{tabular}{|| c c ||}
\hline
$\xi$ & $\sharp$ random bits\\
\hline
0.000 & 1.000\\
0.013 & 0.941\\
0.027 & 0.884\\
0.040 & 0.830\\
0.053 & 0.779\\
0.067 & 0.729\\
0.080 & 0.682\\
0.093 & 0.637\\
0.106 & 0.595\\
0.120 & 0.555\\
0.133 & 0.519\\
0.146 & 0.485\\
0.160 & 0.453\\
0.173 & 0.424\\
0.186 & 0.396\\
0.200 & 0.371\\
0.213 & 0.348\\
0.226 & 0.327\\
0.240 & 0.307\\
0.253 & 0.289\\
0.266 & 0.273\\
0.280 & 0.258\\
0.293 & 0.243\\
0.306 & 0.229\\
0.319 & 0.214\\
0.333 & 0.200\\
0.346 & 0.186\\
0.359 & 0.171\\
0.373 & 0.157\\
0.386 & 0.143\\
0.399 & 0.129\\
0.413 & 0.115\\
0.426 & 0.102\\
0.439 & 0.089\\
0.453 & 0.077\\
0.466 & 0.064\\
0.479 & 0.053\\
0.493 & 0.041\\
0.506 & 0.031\\
0.519 & 0.021\\
0.532 & 0.012\\
0.546 & 0.004\\
0.559 & 0.000\\
0.572 & 0.000\\
\hline
\end{tabular}
}
\end{table}


\begin{table}
\parbox{.45\linewidth}{
\centering
\caption{$\theta = \frac{\pi}{16}$}
\begin{tabular}{|| c c ||}
\hline
$\xi$ & $\sharp$ random bits\\
\hline
0.000 & 1.000\\
0.013 & 0.896\\
0.027 & 0.800\\
0.040 & 0.714\\
0.053 & 0.641\\
0.067 & 0.577\\
0.080 & 0.521\\
0.093 & 0.473\\
0.106 & 0.429\\
0.120 & 0.391\\
0.133 & 0.356\\
0.146 & 0.325\\
0.160 & 0.297\\
0.173 & 0.271\\
0.186 & 0.248\\
0.200 & 0.227\\
0.213 & 0.207\\
0.226 & 0.190\\
0.240 & 0.174\\
0.253 & 0.159\\
0.266 & 0.146\\
0.280 & 0.134\\
0.293 & 0.122\\
0.306 & 0.112\\
0.319 & 0.103\\
0.333 & 0.095\\
0.346 & 0.087\\
0.359 & 0.078\\
0.373 & 0.070\\
0.386 & 0.062\\
0.399 & 0.055\\
0.413 & 0.047\\
0.426 & 0.040\\
0.439 & 0.034\\
0.453 & 0.027\\
0.466 & 0.021\\
0.479 & 0.016\\
0.493 & 0.011\\
0.506 & 0.007\\
0.519 & 0.003\\
0.532 & 0.000\\
0.546 & 0.000\\
0.559 & 0.000\\
0.572 & 0.000\\
\hline
\end{tabular}
}
\hspace{0.5cm}
\parbox{.45\linewidth}{
\centering
\caption{$\theta = \frac{\pi}{32}$}
\begin{tabular}{|| c c ||}
\hline
$\xi$ & $\sharp$ random bits\\
\hline
0.000 & 1.000\\
0.013 & 0.823\\
0.027 & 0.706\\
0.040 & 0.619\\
0.053 & 0.551\\
0.067 & 0.493\\
0.080 & 0.444\\
0.093 & 0.400\\
0.106 & 0.362\\
0.120 & 0.328\\
0.133 & 0.297\\
0.146 & 0.269\\
0.160 & 0.244\\
0.173 & 0.221\\
0.186 & 0.200\\
0.200 & 0.181\\
0.213 & 0.163\\
0.226 & 0.147\\
0.240 & 0.133\\
0.253 & 0.119\\
0.266 & 0.107\\
0.280 & 0.095\\
0.293 & 0.085\\
0.306 & 0.076\\
0.319 & 0.067\\
0.333 & 0.059\\
0.346 & 0.052\\
0.359 & 0.046\\
0.373 & 0.040\\
0.386 & 0.035\\
0.399 & 0.030\\
0.413 & 0.025\\
0.426 & 0.021\\
0.439 & 0.017\\
0.453 & 0.013\\
0.466 & 0.009\\
0.479 & 0.006\\
0.493 & 0.004\\
0.506 & 0.002\\
0.519 & 0.000\\
0.532 & 0.000\\
0.546 & 0.000\\
0.559 & 0.000\\
0.572 & 0.000\\
\hline
\end{tabular}
}
\end{table}

\clearpage


\begin{thebibliography}{10}

\bibitem{Acin2012}
Antonio Ac\'in, Serge Massar, and Stefano Pironio.
\newblock Randomness versus nonlocality and entanglement.
\newblock {\em Phys. Rev. Lett.}, 108:100402, Mar 2012.
\newblock \href {http://dx.doi.org/10.1103/PhysRevLett.108.100402}
  {\path{doi:10.1103/PhysRevLett.108.100402}}.

\bibitem{Acin2015}
Antonio Ac{\'i}n, Stefano Pironio, Tam{\'a}s V{\'e}rtesi, and Peter Wittek.
\newblock Optimal randomness certification from one entangled bit.
\newblock {\em Phys. Rev. A}, 93(4):040102, April 2016.
\newblock \href {http://dx.doi.org/10.1103/PhysRevA.93.040102}
  {\path{doi:10.1103/PhysRevA.93.040102}}.

\bibitem{bancal2014more}
Jean-Daniel Bancal, Lana Sheridan, and Valerio Scarani.
\newblock More randomness from the same data.
\newblock {\em New J. Phys.}, 16(3):033011, 2014.
\newblock \href {http://dx.doi.org/10.1088/1367-2630/16/3/033011}
  {\path{doi:10.1088/1367-2630/16/3/033011}}.

\bibitem{Bell1964}
John~S. Bell.
\newblock On the {Einstein Podolsky Rosen} paradox.
\newblock {\em Physics}, 1:195, 1964.

\bibitem{BellReview}
Nicolas Brunner, Daniel Cavalcanti, Stefano Pironio, Valerio Scarani, and
  Stephanie Wehner.
\newblock Bell nonlocality.
\newblock {\em Rev. Mod. Phys.}, 86:419--478, Apr 2014.
\newblock \href {http://dx.doi.org/10.1103/RevModPhys.86.419}
  {\path{doi:10.1103/RevModPhys.86.419}}.

\bibitem{buci}
Orest Bucicovschi and Ji{\v{r}}\'i Lebl.
\newblock On the continuity and regularity of convex extensions.
\newblock {\em J. Convex Anal.}, 20(4):1113--1126, 2013.

\bibitem{Colbeck}
Roger Colbeck.
\newblock {\em Quantum and Relativistic Protocols for Secure Multi-Party
  Computation}.
\newblock PhD thesis, University of Cambridge, 2006.

\bibitem{CR}
Roger Colbeck and Renato Renner.
\newblock Free randomness can be amplified.
\newblock {\em Nat. Phys.}, 8(6):450--454, May 2012.
\newblock \href {http://dx.doi.org/10.1038/nphys2300}
  {\path{doi:10.1038/nphys2300}}.

\bibitem{curchod2017unbounded}
Florian~J. Curchod, Markus Johansson, Remigiusz Augusiak, Matty~J. Hoban, Peter
  Wittek, and Antonio Ac{\'{\i}}n.
\newblock Unbounded randomness certification using sequences of measurements.
\newblock {\em Phys. Rev. A}, 95(2), feb 2017.
\newblock \href {http://dx.doi.org/10.1103/physreva.95.020102}
  {\path{doi:10.1103/physreva.95.020102}}.

\bibitem{DAriano}
Giacomo~Mauro D'Ariano, Paoloplacido~Lo Presti, and Paolo Perinotti.
\newblock Classical randomness in quantum measurements.
\newblock {\em J. Phys. A: Math. Gen.}, 38(26):5979, 2005.
\newblock \href {http://dx.doi.org/10.1088/0305-4470/38/26/010}
  {\path{doi:10.1088/0305-4470/38/26/010}}.

\bibitem{DeLaTorre}
Gonzalo de~la Torre, Matty~J. Hoban, Chirag Dhara, Giuseppe Prettico, and
  Antonio Ac\'in.
\newblock Maximally nonlocal theories cannot be maximally random.
\newblock {\em Phys. Rev. Lett.}, 114(16):160502, 2015.
\newblock \href {http://dx.doi.org/10.1103/physrevlett.114.160502}
  {\path{doi:10.1103/physrevlett.114.160502}}.

\bibitem{Gallego}
Rodrigo Gallego, Lluis Masanes, Gonzalo De~La~Torre, Chirag Dhara, Leandro
  Aolita, and Antonio Ac\'in.
\newblock Full randomness from arbitrarily deterministic events.
\newblock {\em Nat. Commun.}, 4:2654, 2013.
\newblock \href {http://dx.doi.org/10.1038/ncomms3654}
  {\path{doi:10.1038/ncomms3654}}.

\bibitem{GallegoSeq}
Rodrigo Gallego, Lars~Erik W\"urflinger, Rafael Chaves, Antonio Ac\'in, and
  Miguel Navascu\'es.
\newblock Nonlocality in sequential correlation scenarios.
\newblock {\em New J. Phys.}, 16(3):033037, 2014.
\newblock \href {http://dx.doi.org/10.1088/1367-2630/16/3/033037}
  {\path{doi:10.1088/1367-2630/16/3/033037}}.

\bibitem{hu2016experimental}
Meng-Jun Hu, Zhi-Yuan Zhou, Xiao-Min Hu, Chuan-Feng Li, Guang-Can Guo, and
  Yong-Sheng Zhang.
\newblock Experimental sharing of nonlocality among multiple observers with one
  entangled pair via optimal weak measurements.
\newblock {\em arXiv:1609.01863}, Sep 2016.
\newblock \href {http://arxiv.org/abs/1609.01863} {\path{arXiv:1609.01863}}.

\bibitem{Masanes2006}
Lluis Masanes, Antonio Ac\'in, and Nicolas Gisin.
\newblock General properties of nonsignaling theories.
\newblock {\em Phys. Rev. A}, 73(1):012112, Jan 2006.
\newblock \href {http://dx.doi.org/10.1103/physreva.73.012112}
  {\path{doi:10.1103/physreva.73.012112}}.

\bibitem{Nieto}
Olmo Nieto-Silleras, Stefano Pironio, and Jonathan Silman.
\newblock Using complete measurement statistics for optimal device-independent
  randomness evaluation.
\newblock {\em New J. Phys.}, 16(1):013035, 2014.
\newblock \href {http://dx.doi.org/10.1088/1367-2630/16/1/013035}
  {\path{doi:10.1088/1367-2630/16/1/013035}}.

\bibitem{Randomness}
Stefano Pironio, Antonio Ac{\'i}n, Serge Massar, A~Boyer de~La~Giroday,
  Dzimitry~N Matsukevich, Peter Maunz, Steven Olmschenk, David Hayes, Le~Luo,
  T~Andrew Manning, and C.~Monroe.
\newblock Random numbers certified by {Bell}'s theorem.
\newblock {\em Nature}, 464(7291):1021--1024, 2010.
\newblock \href {http://dx.doi.org/10.1038/nature09008}
  {\path{doi:10.1038/nature09008}}.

\bibitem{Popescu}
Sandu Popescu.
\newblock Bell's inequalities and density matrices: Revealing ``hidden''
  nonlocality.
\newblock {\em Phys. Rev. Lett.}, 74(14):2619--2622, Apr 1995.
\newblock \href {http://dx.doi.org/10.1103/physrevlett.74.2619}
  {\path{doi:10.1103/physrevlett.74.2619}}.

\bibitem{PR}
Sandu Popescu and Daniel Rohrlich.
\newblock Quantum nonlocality as an axiom.
\newblock {\em Found. Phys.}, 24(3):379--385, 1994.
\newblock \href {http://dx.doi.org/10.1007/BF02058098}
  {\path{doi:10.1007/BF02058098}}.

\bibitem{rock}
Tyrrell Rockafellar.
\newblock {\em Convex Analysis}.
\newblock Princeton Press, 1970.

\bibitem{schiavon2017three}
Matteo Schiavon, Luca Calderaro, Mirko Pittaluga, Giuseppe Vallone, and Paolo
  Villoresi.
\newblock Three-observer bell inequality violation on a two-qubit entangled
  state.
\newblock {\em Quantum Science and Technology}, 2(1):015010, mar 2017.
\newblock \href {http://dx.doi.org/10.1088/2058-9565/aa62be}
  {\path{doi:10.1088/2058-9565/aa62be}}.

\bibitem{Silva}
Ralph Silva, Nicolas Gisin, Yelena Guryanova, and Sandu Popescu.
\newblock Multiple observers can share the nonlocality of half of an entangled
  pair by using optimal weak measurements.
\newblock {\em Phys. Rev. Lett.}, 114(25):250401, 2015.
\newblock \href {http://dx.doi.org/10.1103/physrevlett.114.250401}
  {\path{doi:10.1103/physrevlett.114.250401}}.

\bibitem{Vazirani}
Umesh Vazirani and Thomas Vidick.
\newblock Certifiable quantum dice.
\newblock {\em Phil. Trans. R. Soc. A.}, 370(1971):3432--3448, Jun 2012.
\newblock \href {http://dx.doi.org/10.1098/rsta.2011.0336}
  {\path{doi:10.1098/rsta.2011.0336}}.

\bibitem{wittek2015ncpol2sdpa}
Peter Wittek.
\newblock Algorithm 950: Ncpol2sdpa---sparse semidefinite programming
  relaxations for polynomial optimization problems of noncommuting variables.
\newblock {\em ACM Trans. Math. Software}, 41(3):21, 2015.
\newblock \href {http://dx.doi.org/10.1145/2699464}
  {\path{doi:10.1145/2699464}}.

\bibitem{yamashita2003implementation}
Makoto Yamashita, Katsuki Fujisawa, and Masakazu Kojima.
\newblock Implementation and evaluation of {SDPA} 6.0 (semidefinite programming
  algorithm 6.0).
\newblock {\em Optim. Method. Softw.}, 18(4):491--505, 2003.
\newblock \href {http://dx.doi.org/10.1080/1055678031000118482}
  {\path{doi:10.1080/1055678031000118482}}.

\end{thebibliography}
\end{document}